\newtheorem{theorem}{Theorem}
\newtheorem{prop}{Proposition}
\newtheorem{remark}{Remark}
\newtheorem{lemma}{Lemma}
\newtheorem{example}{Example}
\newcommand{\red}[1]{\textcolor{red}{#1}}
\begin{document}

\title{\red{Optimizing Timer-based  Policies for General Cache Networks}} 
\title{A TTL-based Approach for Content Placement in Edge Networks}
\author{Nitish K. Panigrahy$^{*}$, Jian Li$^{*}$,~\IEEEmembership{Member,~IEEE,} Faheem Zafari,~\IEEEmembership{Student Member,~IEEE,} \\Don Towsley,~\IEEEmembership{Life Fellow,~IEEE,}  and Paul Yu,~\IEEEmembership{Member,~IEEE}
\thanks{${}^*$Authors with equal contribution.}\thanks{This research was sponsored by the U.S. ARL and the U.K. MoD under Agreement Number W911NF-16-3-0001 and by the NSF under Grant CNS-1617437. The views and conclusions contained in this document are those of the authors and should not be interpreted as representing the official policies, either expressed or implied, of the National Science Foundation, U.S. Army Research Laboratory, the U.S. Government, the U.K. Ministry of Defence or the U.K. Government. The U.S. and U.K. Governments are authorized to reproduce and distribute reprints for Government purposes notwithstanding any copyright notation hereon. } 
\thanks{N. Panigrahy and D. Towsley are with the University of Massachusetts, Amherst, MA, 01003, USA (e-mail: \{nitish, towsley\}@cs.umass.edu).}
\thanks{J. Li is with Binghamton University, the State University of New York, Binghamton, NY, 13902, USA (e-mail: lij@binghamton.edu).}
\thanks{F. Zafari is with Imperial College London, London, SW72BT, UK, (e-mail: faheem16@imperial.ac.uk).}
\thanks{P. Yu is with U.S. Army Research Laboratory, Adelphi, MD 20783, USA (e-mail: paul.l.yu.civ@mail.mil).}
}
\maketitle

\begin{abstract}

Edge networks are promising to provide better services to users by provisioning computing and storage resources at the edge of networks.   However, due to the uncertainty and diversity of user interests, content popularity, distributed network structure, cache sizes, it is challenging to decide where to place the content, and how long it should be cached.  In this paper,  we study the utility optimization of content placement at edge networks through timer-based (TTL) policies.  We propose provably optimal distributed algorithms that operate at each network cache to maximize the overall network utility.  Our TTL-based optimization model provides theoretical answers to how long each content must be cached, and where it should be placed in the edge network.  Extensive evaluations show that our algorithm significantly outperforms path replication with conventional caching algorithms over some network topologies.

\end{abstract}

\begin{IEEEkeywords}
TTL Cache; Utility Maximization; Distributed Algorithms; Cost Minimization; Edge Network
\end{IEEEkeywords}

\IEEEpeerreviewmaketitle

\section{Introduction}\label{sec:intro}

Content distribution has become a dominant application in today's Internet.  Much of these contents are delivered by Content Distribution Networks (CDNs), which are provided by Akamai, Amazon, etc \cite{jiang12}.  There usually exists a stringent requirement on the latency between the service provider and end users for these applications.  CDNs use a large network of caches to deliver content from a location close to the end users.  This aligns with the trend of edge networks, where computing and storage resources are provisioned at the edge of networks.  If a user's request is served by a nearby edge cache,  
 the user experiences a faster response time than if it was served by the backend server.   It also reduces bandwidth requirements at the central content repository.

With the aggressive increase in Internet traffic over past years \cite{cisco15}, CDNs need to host content from thousands of regions belonging to web sites of thousands of content providers.  Furthermore, each content provider may host a large variety of content, including videos, music, and images.  Such an increasing diversity in content services requires CDNs to provide different quality of service to varying content classes and applications with different access characteristics and performance requirements.  Significant economic benefits and important technical gains have been observed with the deployment of service differentiation \cite{feldman02}.  While a rich literature has studied the design of fair and efficient caching algorithms for content distribution, little work has paid attention to the provision of multi-level services in edge networks.

Managing edge networks requires policies to route end-user requests to the local distributed caches, as well as caching algorithms to ensure availability of requested content at the cache.  In general, there are two classes of policies for studying the performance of caching algorithms: \emph{timer-based, i.e., Time-To-Live (TTL)} \cite{fagin77,che02,fofack14} and  \emph{non-timer-based} caching algorithms, e.g., Least-Recently-Used (LRU) \cite{coffman73},  Least-Frequently-Used (LFU) \cite{coffman73}, First In First Out (FIFO), and RANDOM \cite{coffman73}. 
 Since the cache size is usually much smaller than the total amount of content, some contents need to be evicted if the requested content is not in the cache.  
Exact analysis of LRU, LFU, FIFO and RANDOM has proven to be difficult, even under the simple \emph{Independence Reference Model} (IRM) \cite{coffman73}, where requests are independent of each other.   The strongly coupled nature of these eviction algorithms makes implementation of differential services challenging.  In contrast, a TTL cache associates each content with a timer upon request and the content is evicted from the cache on timer expiry, independent of other contents.  Analysis of these policies is simple since the eviction of contents are decoupled from each other.

Most studies have focused on the analysis of a single edge cache.  When an edge network is considered, independence across different caches is usually assumed \cite{rosensweig10}.   Again, it is hard to analyze  most conventional caching algorithms, such as LRU, FIFO and RANDOM,  but some accurate results for TTL caches are available \cite{berger14,fofack14}.  However, it has been observed \cite{laoutaris04} that performance gains can be obtained if decision-making is coupled at different caches.

\begin{figure}
\centering
\includegraphics[width=0.9\linewidth]{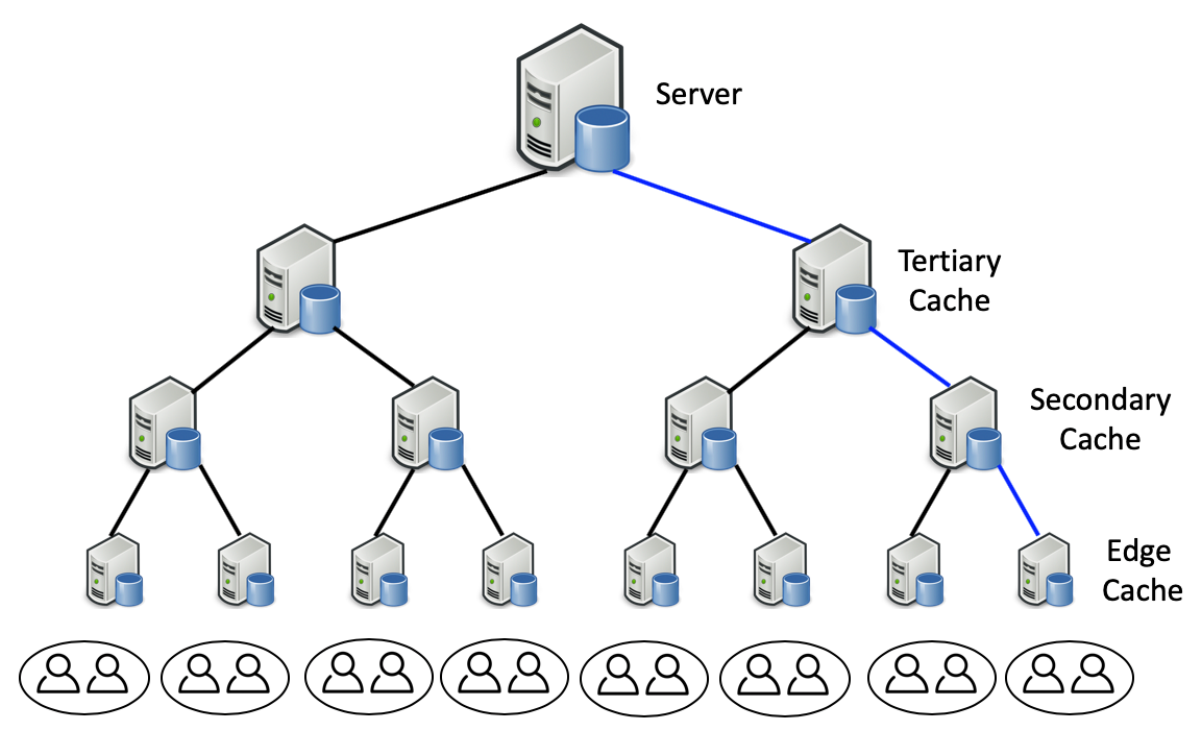}
\caption{An edge network with a server holding all contents and three layers of caches.  Each edge cache serves a set of users with different requests.  The blue line illustrates a unique path between the edge cache and the server.}
\label{fig:model}
\vspace{-0.1in}
\end{figure}

In this paper, we consider a TTL-based edge network, where a set of network caches host a library of unique contents, and serve a set of users.  Figure~\ref{fig:model} illustrates an example of such a network, which is consistent with the YouTube video delivery system \cite{ramadan17,sasikumar19}.  Each user can generate a request for a content, which is forwarded along a fixed \emph{path} from the edge cache towards the server.  Forwarding stops upon a cache hit, i.e., the requested content is found in a cache on the path.  When such a cache hit occurs, the content is sent over the reverse path to the edge cache initializing the request.  This raises the questions: \emph{where to cache the requested content on the reverse path} and \emph{what is the value of its timer?}  Answering these questions can provide new insights in edge network design.  However, it may also increase the complexity and hardness of the analysis.

Our goal is to provide thorough and rigorous answers to these questions.  To that end, we consider moving the content one cache up (towards the user) if there is a cache hit on it and pushing the content one cache down (away from the user) once its timer expires in the cache hierarchy, since the recently evicted content may still be in demand.  This policy is known as ``Move Copy Down with Push" (MCDP) policy.  

We first formulate a utility-driven caching framework for linear edge networks, where each content is associated with a utility and content is managed with a timer whose duration is set to maximize the aggregate utility for all contents over the edge network.  Building on MCDP policy, we formulate the optimal TTL policy as a non-convex optimization problem in Section~\ref{sec:line}.  One contribution of this paper is to show that this non-convex problem can be transformed into a convex one by change of variables.  We further develop distributed algorithms for content management over linear edge networks, and  show that this algorithm converges to the optimal solution.

Informed by our results for linear edge networks, we consider a general cache network where each edge cache serves distinct contents, i.e., there are no common contents among edge caches, in Section~\ref{sec:non-common}. We show that this edge network can be treated as a union of different linear edge networks between each edge cache and the central server. 

We further consider a more general case where common content is requested among edge caches in Section~\ref{sec:common}.  
This introduces non-convex constraints, resulting in a non-convex utility maximization problem.  We show that although the original problem is non-convex, the duality gap is zero.  Based on this, we design a distributed iterative primal-dual algorithm for content placement in edge network.   We show through numerical evaluations that our algorithm outperforms path replication with traditional caching algorithms over a broad array of network topologies. 

We provide some discussions on extension of our utility maximization framework under MCDP to a general graph based cache networks in Section~\ref{sec:app}.  We show how our framework can be directly mapped to content distributions in CDNs, ICNs/CCNs etc.  Numerical results are given on how to optimize the performance.  Also, since we consider an edge network, content placement induces costs, such as \emph{search cost} for finding the requested content on the path, \emph{fetch cost} to serve the content to the user that requested it, and \emph{move cost} upon cache hit or miss due to caching policy.  We fully characterize these costs and formulate a cost minimization problem in Section~\ref{sec:cost}.   We discuss related works in Section~\ref{sec:related} and conclude the paper in Section~\ref{sec:conclusion}.



\section{Preliminaries}\label{sec:prelim}

We represent the edge cache network shown in Figure~\ref{fig:model} by a graph $G=(V, E).$  We use $\mathcal{D}=\{d_1, \cdots, d_n\}$ with $|\mathcal{D}|=n$ to denote the set of contents.  Each network cache $v\in V$ can store up to $B_v$ contents to serve requests from users.   We assume that each user will first send a request for the content to its local network cache, which may then route the request to other caches for retrieving the content.  Without loss of generality, we assume that there is a fixed and unique path from the local cache towards a terminal cache that is connected to a server that always contains the content.   

To be more specific, a request $(v, i, p)$ is determined by the local cache, $v$, that firstly received the user request, the requested content, $i$, and the path, $p$, over which the request is routed.  We denote a path $p$ of length $|p|=L$ as a sequence $\{v_{1p}, v_{2p}, \cdots, v_{Lp}\}$ of nodes $v_{lp}\in V$ such that $(v_{lp}, v_{(l+1)p})\in E$ for $l\in\{1, \cdots, L\},$ where $v_{Lp}=v.$   We assume that path $p$ is loop-free and terminal cache $v_{1p}$ is the only cache on path $p$ that accesses the server for content $i.$


We assume that the request processes for distinct contents are described by independent Poisson processes with arrival rate $\lambda_i$ for content $i\in\mathcal{D}.$ Denote $\Lambda = \sum_{i=1}^n \lambda_i.$  Then the popularity (request probability) of content $i$ satisfies \cite{baccelli13}
\begin{align}
\rho_i=\frac{\lambda_i}{\Lambda}, \quad i=1,\cdots, n.
\end{align}

We consider TTL caches in this paper.  Each content $i$ is associated with a timer $T_{ij}$ at cache $j$.  Suppose content $i$ is requested and routed along path $p.$ There are two cases: (i) content $i$ is not in any cache along path $p,$ in which case content $i$ is fetched from the server and inserted into the terminal cache (denoted by cache $1$)\footnote{Since we consider path $p$, for simplicity, we move the dependency on $p$ and $v$, denote it as nodes $1,\cdots, L$ directly.}. Its timer is set to $T_{i1}$; (ii) if content $i$ is in cache $l$ along path $p,$ content $i$ is moved to cache $l+1$ preceding cache $l$ in which $i$ is found, and the timer at cache $l+1$ is set to $T_{i{(l+1)}}$.  Content $i$ is pushed one cache back to cache $l-1$ and the timer is set to $T_{i(l-1)},$ once the timer expires.  We call it Move Copy Down with Push (MCDP) \cite{rodriguez16}.  Denote the \emph{hit probability} of content $i$ as $h_i,$ then the corresponding \emph{hit rate} is $\lambda_i h_i.$


Denote by $\mathcal{P}$ the set of all requests, and $\mathcal{P}^i$ the set of requests for content $i.$   Suppose a network cache $v$ serves two requests $(v_1, i_1, p_1)$ and $(v_2, i_2, p_2)$, then there are two cases: (i) non-common requested content, i.e., $i_1\neq i_2;$ and (ii) common requested content, i.e., $i_1=i_2.$  In the following, we will focus on how to design optimal TTL policies for content placement in an edge cache network under these two cases.

While classical cache eviction policies such as LRU provide good performance and are easy to implement, Garetto et al. \cite{garetto16} showed that K-LRU\footnote{{K-LRU adds $K-1$ meta-caches ahead of the real cache. Only ``popular" contents (requested at least $K-1$ times) are stored in real cache.}} can provide significant improvements over LRU even for very small K.  Furthermore, Ramadan et al. \cite{ramadan17} proposed K-LRU with big cache abstraction (K-LRU(B)) to effectively utilize resources in a hierarchical network of cache servers.  
Thus, in the rest of the paper, we compare the performance of MCDP to K-LRU and K-LRU(B).


\section{Linear Edge Network}\label{sec:line}


We begin with a linear edge network, i.e., there is a single path between the user and the server,  composed of $|p|=L$ caches labeled $1,\cdots, L.$  A content enters the edge network via cache $1,$ and is promoted to a higher index cache whenever a cache hit occurs.  In the following, we consider the MCDP replication strategy when each cache operates with a TTL policy. We also consider another replication strategy ``Move Copy Down" (MCD) and relegate our discussion on it to Appendix \ref{mcdmodel-appendix} and Appendix \ref{app-mcd-opt}.


\subsection{Stationary Behavior}\label{sec:ttl-stationary}


Requests for content $i$ arrive according to a Poisson process with rate $\lambda_i.$  Under TTL, content $i$ spends a deterministic time in a cache if it is not requested, independent of all other contents. We denote the timer as $T_{il}$ for content $i$ in cache $l$ on the path $p,$ where $l\in\{1,\cdots, |p|\}.$  

Denote by $t_k^i$ the $k$-th time that content $i$ is either requested or the timer expires.  For simplicity, we assume that content is in cache $0$ (i.e., server) when it is not in the cache network.  We then define a discrete time Markov chain (DTMC) $\{X_k^i\}_{k\geq0}$ with $|p|+1$ states, where $X_k^i$ is the index of the cache that content $i$ is in at time $t_k^i.$  The event that the time between two requests for content $i$ exceeds $T_{il}$ occurs with probability $e^{-\lambda_i T_{il}}$; consequently we obtain the transition probability matrix of $\{X_k^i\}_{k\geq0}$ and compute the stationary distribution. 
 The timer-average probability that content $i$ is in cache $l\in\{1,\cdots, |p|\}$ is 
\begin{subequations}\label{eq:hit-prob-mcdp}
\begin{align}
& h_{i1} = \frac{e^{\lambda_iT_{i1}}-1}{1+\sum_{j=1}^{|p|}(e^{\lambda_iT_{i1}}-1)\cdots (e^{\lambda_iT_{ij}}-1)},\label{eq:mcdp1}\\
& h_{il} = h_{i(l-1)}(e^{\lambda_iT_{il}}-1),\; l = 2,\cdots,|p|,\label{eq:mcdp2}
\end{align}
\end{subequations}
where $h_{il}$ is also the hit probability for content $i$ at cache $l.$
\begin{remark}
The stationary analysis of MCDP is similar to a different caching policy LRU($\boldsymbol m$) considered in \cite{gast16}.  We relegate its explicit expression to Appendix \ref{mcdpmodel-appendix}, and also refer interested readers to  \cite{gast16} for more detail. 
\end{remark}
%
%

\subsection{From Timer to Hit Probability}\label{sec:ttl-hit-prob}
We consider a TTL cache network where requests for different contents are independent of each other and each content $i$ is associated with a timer $T_{il}$ at each cache $l\in\{1,\cdots,|p|\}$ on the path.  Denote $\boldsymbol T_i=(T_{i1},\cdots, T_{i|p|})$ and $\boldsymbol T=(\boldsymbol T_1,\cdots, \boldsymbol T_n)$.  From~(\ref{eq:hit-prob-mcdp}), 
the overall utility on path $p$ is given as 
\begin{align}
\sum_{i\in\mathcal{D}}\sum_{l=1}^{|p|}\psi^{|p|-l}U_i(\lambda_ih_{il}(\boldsymbol T)),
\end{align}
where the utility function $U_i: [0,\infty)\rightarrow\mathbb{R}$ is assumed to be increasing, continuously differentiable, and strictly concave function of content hit rate,  and $0<\psi\leq1$ is a discount factor capturing the utility degradation along the request's routing direction.   Since each cache is finite in size, we have the capacity constraint 
\begin{align}
\sum_{i\in\mathcal{D}}h_{il}(\boldsymbol T)\leq B_l, \quad l\in\{1,\cdots,|p|\}.
\end{align}
Therefore, the optimal TTL policy for content placement on path $p$ is the solution of the following optimization problem
\begin{align}\label{eq:max-ttl}
\max_{\boldsymbol T} \quad&\sum_{i\in\mathcal{D}}\sum_{l=1}^{|p|}\psi^{|p|-l}U_i(\lambda_ih_{il}(\boldsymbol T))\nonumber\displaybreak[0]\\
\text{s.t.}\quad&\sum_{i\in\mathcal{D}}h_{il}(\boldsymbol T)\leq B_l, \quad l\in\{1,\cdots,|p|\},\nonumber\displaybreak[1]\\
&T_{il} \ge 0, \quad\forall i \in \mathcal{D}, \quad l=1, \cdots, |p|,
\end{align}
where $h_{il}(\boldsymbol T)$ is given in~(\ref{eq:hit-prob-mcdp}).  However, ~(\ref{eq:max-ttl}) is a non-convex optimization with a non-linear constraint.  Our objective is to characterize the optimal timers for different contents on path $p.$.  To that end, it is helpful to express~(\ref{eq:max-ttl}) in terms of hit probabilities. In the following, we discuss how to change the variables from timer to hit probability. 

Since $0\leq T_{il}\leq \infty$, it is easy to check that $0\leq h_{il}\leq 1$ for $l\in\{1,\cdots,|p|\}$ from~(\ref{eq:mcdp1}) and~(\ref{eq:mcdp2}). Furthermore, it is clear that there exists a mapping between $(h_{i1},\cdots, h_{i|p|})$ and $(T_{i1},\cdots, T_{i|p|}).$ By simple algebra, we obtain
\begin{subequations}\label{eq:stationary-mcdp-timers}
\begin{align}
& T_{i1} = \frac{1}{\lambda_i}\log \bigg(1 + \frac{h_{i1}}{1-\big(h_{i1} + h_{i2} + \cdots + h_{i|p|}\big)}\bigg), \label{eq:mcdpttl1}\\
& T_{il} = \frac{1}{\lambda_i}\log \bigg(1 + \frac{h_{il}}{h_{i(l-1)}}\bigg),\quad l= 2,\cdots, |p|.\label{eq:mcdpttl2}
\end{align}
\end{subequations}
Note that 
\begin{align}\label{eq:mcdp-constraint}
h_{i1} + h_{i2} + \ldots + h_{i|p|} \leq 1,
\end{align}
must hold during the operation, which is always true for our caching policies.


\subsection{Maximizing Aggregate Utility}\label{sec:line-utility-max}
With the change of variables discussed above, 
we can reformulate~(\ref{eq:max-ttl}) as follows 
\begin{subequations}\label{eq:max-mcdp}
\begin{align}
\max \quad&\sum_{i\in \mathcal{D}} \sum_{l=1}^{|p|} \psi^{|p|-l} U_i(\lambda_ih_{il}) \displaybreak[0]\\
\text{s.t.} \quad&\sum_{i\in \mathcal{D}} h_{il} \leq B_l,\quad l=1, \cdots, |p|,  \displaybreak[1]\label{eq:hpbmcdp1}\\
& \sum_{l=1}^{|p|}h_{il}\leq 1,\quad\forall i \in \mathcal{D},  \displaybreak[2]\label{eq:hpbmcdp2}\\
&0\leq h_{il}\leq1, \quad\forall i \in \mathcal{D}, \quad l=1, \cdots, |p|\label{eq:hpbmcdp3}
\end{align}
\end{subequations}
where~(\ref{eq:hpbmcdp1}) is the cache capacity constraint and~(\ref{eq:hpbmcdp2}) is due to the variable exchanges under MCDP as discussed above.

\begin{prop}
Optimization problem defined in~(\ref{eq:max-mcdp}) under MCDP has a unique global optimum.
\end{prop}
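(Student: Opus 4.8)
The plan is to show that \eqref{eq:max-mcdp} amounts to maximizing a strictly concave function over a nonempty compact convex set, so that existence and uniqueness of the optimum follow from standard convex-analytic facts. First I would note that, after the change of variables of Section~\ref{sec:ttl-hit-prob}, the decision vector is $\boldsymbol h=(h_{il})_{i\in\mathcal D,\,1\le l\le|p|}$ and that \emph{every} constraint in \eqref{eq:max-mcdp} is linear in $\boldsymbol h$: the capacity constraints (\ref{eq:hpbmcdp1}), the per-content constraints (\ref{eq:hpbmcdp2}) inherited from (\ref{eq:mcdp-constraint}), and the box constraints $0\le h_{il}\le 1$. Hence the feasible region $\mathcal H$ is a polyhedron contained in $[0,1]^{n|p|}$, so it is convex, closed, bounded, and therefore compact; it is also nonempty since $\boldsymbol h=\boldsymbol 0$ satisfies all constraints. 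This is exactly the point where the reformulation pays off: the original problem (\ref{eq:max-ttl}) had the nonlinear, non-convex constraint $\sum_i h_{il}(\boldsymbol T)\le B_l$, whereas now the feasible set is a simple polytope.

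Next I would establish that the objective $F(\boldsymbol h)=\sum_{i\in\mathcal D}\sum_{l=1}^{|p|}\psi^{|p|-l}U_i(h_{il})$ is strictly concave on $\mathcal H$. The sum is separable: the coordinate $h_{il}$ appears only in the single term $\psi^{|p|-l}U_i(h_{il})$, which is a strictly positive multiple (recall $0<\psi\le 1$) of the strictly concave function $U_i$ from Section~\ref{sec:prelim}. Given $\boldsymbol h^{(1)}\neq\boldsymbol h^{(2)}$ and $\theta\in(0,1)$, at least one coordinate differs, and for that coordinate strict concavity of $U_i$ gives a strict inequality while the remaining terms give the usual weak concavity inequality; summing yields $F(\theta\boldsymbol h^{(1)}+(1-\theta)\boldsymbol h^{(2)})>\theta F(\boldsymbol h^{(1)})+(1-\theta)F(\boldsymbol h^{(2)})$, so $F$ is strictly concave (this remains valid under the convention $U_i(0)=-\infty$ used for $\beta\ge 1$, which only strengthens concavity). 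Since $F$ is upper semicontinuous and $\mathcal H$ is nonempty and compact, the Weierstrass theorem guarantees that a maximizer $\boldsymbol h^\star$ exists, with finite optimal value whenever the cache capacities are positive (there are then feasible points with all coordinates strictly positive).

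Uniqueness then follows directly from strict concavity: if $\boldsymbol h^{(1)}\neq\boldsymbol h^{(2)}$ were both optimal with common value $V$ (finite), their midpoint $\tfrac12(\boldsymbol h^{(1)}+\boldsymbol h^{(2)})$ lies in $\mathcal H$ by convexity, and strict concavity gives $F\big(\tfrac12(\boldsymbol h^{(1)}+\boldsymbol h^{(2)})\big)>\tfrac12 F(\boldsymbol h^{(1)})+\tfrac12 F(\boldsymbol h^{(2)})=V$, contradicting optimality. Hence the maximizer is unique. I do not expect a genuine obstacle here; the substance of the proposition is that the variable change of Section~\ref{sec:ttl-hit-prob} has converted a non-convex program into a convex one, and the only points needing care are (i) verifying that each constraint, in particular (\ref{eq:hpbmcdp2}), is linear in $\boldsymbol h$, and (ii) using separability across coordinates to lift strict concavity of each $U_i$ to strict concavity of the aggregate objective, rather than appealing to a (possibly non-negative-definite) Hessian.
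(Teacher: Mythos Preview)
Your proposal is correct and follows essentially the same approach the paper takes: the paper's argument (made explicit only for the analogous Proposition~3) is the one-line observation that the feasible set is convex and the objective is strictly concave and continuous, hence a unique global optimum exists. Your write-up simply fleshes this out---verifying linearity of the constraints, nonemptiness and compactness of the feasible polytope, separability-based strict concavity of the objective, and the standard existence/uniqueness contradiction---which is exactly what is needed to make the paper's assertion rigorous.
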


%
%

\subsection{Upper Bound (UB) on optimal aggregate utility}\label{sec:line-ub}
Constraint \eqref{eq:hpbmcdp2} in \eqref{eq:max-mcdp} is enforced due to variable exchanges under MCDP as discussed above. 
 Here we can define an upper bound on optimal aggregate utility by removing \eqref{eq:hpbmcdp2} and solving the following optimization problem
\begin{align}\label{eq:max-ub}
&\max \sum_{i\in \mathcal{D}} \sum_{l=1}^{|p|} \psi^{|p|-l} U_i(\lambda_ih_{il}),\nonumber\displaybreak[0]\\
&\text{s.t., constraints~(\ref{eq:hpbmcdp1}) and~(\ref{eq:hpbmcdp3})}.
\end{align}
Note that the UB optimization problem is now independent of any timer driven caching policy and can be used as a performance benchmark for comparing various caching policies.  Furthermore, it is easier to solve UB optimization problem~(\ref{eq:max-ub}) since it involves a smaller number of constraints compared to MCDP based optimization problem~(\ref{eq:max-mcdp}).

\subsection{Distributed Algorithm}\label{sec:line-online-primal}
In Section~\ref{sec:line-utility-max}, we formulated convex utility maximization problems with a fixed cache size.  However, system parameters (e.g. cache size and request processes) can change over time, so it is not feasible to solve the optimization offline and implement the optimal strategy.  Thus, we need to design {distributed} algorithms to implement the optimal strategy and adapt to the changes in the presence of limited information.  

\noindent{\textit{\textbf{Primal Algorithm:}}}
We aim to design an algorithm based on the optimization problem in~(\ref{eq:max-mcdp}), which is the primal formulation. Denote $\boldsymbol h_i=(h_{i1},\cdots, h_{i|p|})$ and $\boldsymbol h=(\boldsymbol h_1,\cdots, \boldsymbol h_n).$  We first define the following objective function.
\begin{align}\label{eq:primal1}
Z(\boldsymbol h) = &\sum_{i\in \mathcal{D}} \sum_{l=1}^{|p|} \psi^{|p|-l} U_i(\lambda_ih_{il})-\sum_{l=1}^{|p|}C_l\left(\sum\limits_{i\in \mathcal{D}} h_{il} - B_l\right)\nonumber\\
&-\sum_{i\in \mathcal{D}}\tilde{C}_i\left(\sum\limits_{l=1}^{|p|}h_{il}-1\right)-\sum_{i\in \mathcal{D}} \sum_{l=1}^{|p|}\hat{C}_{il}(-h_{il}),
\end{align}
where $C_l(\cdot), \tilde{C}_i(\cdot)$ and $\hat{C}_{il}(\cdot)$ are convex and non-decreasing penalty functions denoting the cost for violating constraints~(\ref{eq:hpbmcdp1}) and~(\ref{eq:hpbmcdp2}). 

{Note that constraint~(\ref{eq:hpbmcdp2}) ensures $h_{il} \le 1\; \forall i \in \mathcal{D},\;l=1, \cdots, |p|,$ provided $h_{il} \ge 0.$ One can assume that $h_{il} \ge 0$ holds in writing down \eqref{eq:primal1}. This would be true, for example, if the utility function is a $\beta$-fair utility function with $\beta > 0$ (Section $2.5$\cite{srikant13}). For other utility functions, it is challenging to incorporate constraint (\ref{eq:hpbmcdp3}) since it introduces $n|p|$ additional price functions. For all cases evaluated across various system parameters we found $h_{il} \ge 0$ to hold  true. Hence we ignore constraint (\ref{eq:hpbmcdp3}) in the primal formulation and define the following objective function}
\begin{align}\label{eq:primal}
Z(\boldsymbol h) = &\sum_{i\in \mathcal{D}} \sum_{l=1}^{|p|} \psi^{|p|-l} U_i(\lambda_ih_{il})-\sum_{l=1}^{|p|}C_l\left(\sum\limits_{i\in \mathcal{D}} h_{il} - B_l\right)\nonumber\\
&-\sum_{i\in \mathcal{D}}\tilde{C}_i\left(\sum\limits_{l=1}^{|p|}h_{il}-1\right).
\end{align}


It is clear that $Z(\cdot)$ is strictly concave. Hence, a natural way to obtain the maximal value of~(\ref{eq:primal}) is to use the standard \emph{gradient ascent algorithm} to move the variable $h_{il}$ for $i\in\mathcal{D}$ and $l\in\{1,\cdots,|p|\}$ in the direction of gradient, 
\begin{align}
\frac{\partial Z(\boldsymbol h)}{\partial h_{il}}&=\lambda_i\psi^{|p|-l}U_i^\prime(\lambda_ih_{il})-C_l^\prime\left(\sum\limits_{j\in \mathcal{D}} h_{jl} - B_l\right) \nonumber\displaybreak[0]\\
&\qquad\qquad-\tilde{C}_i^\prime\left(\sum\limits_{m=1}^{|p|}h_{im}-1\right),
\end{align}
where $U_i^\prime(\cdot),$ $C_l^\prime(\cdot)$, $\tilde{C}_i^\prime(\cdot)$ denote partial derivatives w.r.t. $h_{il}.$

Since $h_{il}$ indicates the probability that content $i$ is in cache $l$, $\sum_{j\in \mathcal{D}} h_{jl}$ is the expected number of contents currently in cache $l$, denoted by $B_{\text{curr},l}$. 

Therefore, the primal algorithm for MCDP is given by
\begin{subequations}\label{eq:primal-mcdp}
\begin{align}
&T_{il}[k] \leftarrow 
\begin{cases}
    \frac{1}{\lambda_i}\log \bigg(1 + \frac{h_{il}[k]}{1-\big(h_{i1}[k] + h_{i2}[k] + \cdots + h_{i|p|}[k]\big)}\bigg),\;\; l=1;\\
    \frac{1}{\lambda_i}\log \bigg(1 + \frac{h_{il}[k]}{h_{i(l-1)}[k]}\bigg),\quad l= 2, \cdots , |p|,\label{eq:primal-mcdp-t}
\end{cases}\displaybreak[0]\\
&h_{il}[k+1]\leftarrow \max\Bigg\{0, h_{il}[k]+\zeta_{il}\Bigg[\lambda_i\psi^{|p|-l}U_i^\prime(\lambda_ih_{il}[k])\nonumber\displaybreak[1]\\
&\qquad\qquad-C_l^\prime\left(B_{\text{curr},l} - B_l\right)-\tilde{C}_i^\prime\left(\sum\limits_{m=1}^{|p|}h_{im}[k]-1\right)\Bigg]\Bigg\},\label{eq:primal-mcdp-h}
\end{align} 
\end{subequations}
where $\zeta_{il} > 0$ is the step-size parameter, and $k$ is the iteration number incremented upon each request arrival. 
\begin{theorem}\label{thm:mcdp-primal-conv}
The primal algorithm~(\ref{eq:primal-mcdp}) is locally asymptotically stable given a sufficiently small step-size parameter $\zeta_{il}.$
\end{theorem}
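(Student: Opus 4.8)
The plan is to read the update~(\ref{eq:primal-mcdp}) as a projected gradient-ascent iteration on the strictly concave penalized objective $Z(\boldsymbol h)$ of~(\ref{eq:primal}) and to use $Z$ itself (equivalently, its gap to the optimum) as a Lyapunov function. First I would note that the timer step~(\ref{eq:primal-mcdp-t}) is inert for the convergence analysis: it merely evaluates the bijective change of variables from Section~\ref{sec:ttl-hit-prob}, so the only state that evolves is $\boldsymbol h[k]$, and once $\boldsymbol h[k]\to\boldsymbol h^\star$ the associated timers converge by continuity of~(\ref{eq:stationary-mcdp-timers}). Since each $U_i$ is strictly concave and $C_l,\tilde C_i$ are convex, $Z$ is strictly concave on the feasible region $\{\boldsymbol h : h_{il}\ge 0,\ \sum_l h_{il}<1\}$ and bounded above (each $U_i(h_{il})\le U_i(1)<\infty$ and the penalties are non-decreasing); moreover, because $U_i'(h)\to\infty$ as $h\downarrow 0$ while the penalty derivatives stay finite near the lower boundary, the unique maximizer $\boldsymbol h^\star$ lies in the interior with every $h_{il}^\star>0$, and it is precisely the point at which the first-order stationarity condition for maximization over the nonnegative orthant, $\partial Z(\boldsymbol h^\star)/\partial h_{il}=0$ for all $i,l$, holds.

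Next I would run the Lyapunov argument. Working on a compact set $\mathcal K=\{\boldsymbol h: h_{il}\ge\epsilon,\ \sum_l h_{il}\le 1-\epsilon\}$ that contains $\boldsymbol h^\star$ and on which $\nabla Z$ is Lipschitz with constant $M$, a descent-lemma computation (a second-order Taylor expansion of $Z$) shows that for all step sizes $0<\zeta_{il}\le\bar\zeta$ with $\bar\zeta$ small enough relative to $1/M$,
\[
Z(\boldsymbol h[k+1]) \ \ge\ Z(\boldsymbol h[k]) + c\sum_{i\in\mathcal D}\sum_{l=1}^{|p|}\big(h_{il}[k+1]-h_{il}[k]\big)^2
\]
for some constant $c>0$, where the projection $[\cdot]^+$ in~(\ref{eq:primal-mcdp-h}) is absorbed using nonexpansiveness of Euclidean projection onto the nonnegative orthant. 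Hence $Z(\boldsymbol h[k])$ is non-decreasing and, being bounded above, converges; consequently $h_{il}[k+1]-h_{il}[k]\to 0$, which together with the optimality characterization of the projected point forces the projected gradient of $Z$ at $\boldsymbol h[k]$ to vanish in the limit. By strict concavity the only point with vanishing projected gradient is $\boldsymbol h^\star$, so every limit point of the bounded sequence $\{\boldsymbol h[k]\}$ equals $\boldsymbol h^\star$ and therefore $\boldsymbol h[k]\to\boldsymbol h^\star$. (Equivalently, one can pass to the fluid/ODE limit $\dot h_{il}=\zeta_{il}\,\partial Z/\partial h_{il}$ with reflection at $h_{il}=0$, for which $\dot Z=\sum_{i,l}\zeta_{il}(\partial Z/\partial h_{il})^2\ge 0$ with equality only at $\boldsymbol h^\star$, and invoke LaSalle's invariance principle; a sufficiently small $\zeta_{il}$ guarantees the discrete iteration tracks this ODE.)

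The main obstacle I anticipate is exactly that $\nabla Z$ is \emph{not} globally Lipschitz: for the $\beta$-fair utilities~(\ref{eq:utility}) one has $U_i'(h)=w_i h^{-\beta}\to\infty$ as $h\downarrow 0$, and the map~(\ref{eq:mcdpttl1}) degenerates as $\sum_l h_{il}\uparrow 1$, so the compact set $\mathcal K$ above is not a priori invariant. The fix is to show the iterates stay away from these bad boundaries: near $h_{il}=0$ the gradient $\partial Z/\partial h_{il}$ is large and positive, so for small $\zeta_{il}$ a single step moves $h_{il}$ strictly away from $0$, yielding a (possibly eventually-)invariant compact set $\mathcal K$ on which $M<\infty$ and the argument above applies verbatim; symmetrically, choosing (or noting) that $\tilde C_i'$ grows as its argument approaches $0$ keeps $\sum_l h_{il}$ bounded below $1$, which also ensures~(\ref{eq:primal-mcdp-t}) is well defined along the trajectory. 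A secondary, routine point is making the descent inequality hold in the presence of the $[\cdot]^+$ projection, handled through its nonexpansiveness and the variational characterization of the projected iterate. Finally, I would remark that $\boldsymbol h^\star$ here is the maximizer of the penalized objective $Z$, which approximates the constrained optimum of \textbf{L-U-MCDP} arbitrarily well as the penalty functions $C_l,\tilde C_i$ are made steeper, in the standard manner.
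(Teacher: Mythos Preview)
Your proposal is correct and, in several respects, more careful than the paper's own argument, but it proceeds along a genuinely different line. Both proofs use the same Lyapunov function $Y(\boldsymbol h)=Z(\boldsymbol h^\star)-Z(\boldsymbol h)$, but the paper works in \emph{continuous time} and routes the derivative through the timer variables: it writes $\dot Y=-\sum_{i,l}(\partial Z/\partial h_{il})(\partial h_{il}/\partial T_{il})\,\Delta h_{il}$, computes the Jacobian factor $\partial h_{il}/\partial T_{il}$ explicitly from~(\ref{eq:hit-prob-mcdp}), verifies it is positive, and concludes $\dot Y<0$ because the remaining part is $-\zeta_{il}(\partial Z/\partial h_{il})^2$. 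Your argument instead observes that the timer update~(\ref{eq:primal-mcdp-t}) is inert for the $\boldsymbol h$-dynamics and runs a \emph{discrete-time} descent-lemma analysis directly on~(\ref{eq:primal-mcdp-h}), invoking Lipschitz continuity of $\nabla Z$ on a compact invariant set and nonexpansiveness of the projection.

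What each buys: the paper's chain-rule through $T_{il}$ makes the role of the timer map explicit but is somewhat heuristic (the identification $\partial T_{il}/\partial t=\Delta h_{il}$ is not literally justified, and the step-size/Lipschitz condition never appears), whereas your approach is the standard rigorous treatment of projected gradient ascent and actually uses the ``sufficiently small $\zeta_{il}$'' hypothesis. Your discussion of the boundary obstacle---$U_i'(h)\to\infty$ as $h\downarrow 0$ and the degeneracy of~(\ref{eq:mcdpttl1}) as $\sum_l h_{il}\uparrow 1$---is a point the paper's proof does not address at all; your repulsion argument for eventual invariance of a compact $\mathcal K$ is the right fix. Your closing caveat that $\boldsymbol h^\star$ maximizes the \emph{penalized} objective $Z$ rather than the constrained problem \textbf{L-U-MCDP} is also accurate and worth stating.
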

\begin{proof}
Since $U_i(\cdot)$ is strictly concave, $C_l(\cdot)$ and $\tilde{C}_i(\cdot)$ are convex, then~(\ref{eq:primal}) is strictly concave, hence there exists a unique maximizer.  Denote it as $\boldsymbol h^*.$ 

Any differentiable function $f(x)$ can be linearized around a point $x^*$ as $L(x) = f(x^*) + f^{\prime}(x^*)(x-x^*).$ Denote $\forall i, l$
\begin{align}\label{eq:fun}
f(h_{il}) = h_{il} +\zeta_{il}\Bigg[&\lambda_i\psi^{|p|-l}U_i^\prime(\lambda_ih_{il})-C_l^\prime\left(\sum\limits_{j\in \mathcal{D}} h_{jl} - B_l\right)\nonumber\\&-\tilde{C}_i^\prime\left(\sum\limits_{m=1}^{|p|}h_{im}-1\right)\Bigg],
\end{align}
\noindent with $f:\mathbb{R}^+\rightarrow \mathbb{R}.$ We have $f(h_{il}^*) = h_{il}^*$. Under linearization,
\begin{align}
h_{il}[k+1] = h_{il}^* +  f^{\prime}(h_{il}^*)(h_{il}[k]- h_{il}^*). \label{eq:lin12}
\end{align}
\noindent Denote $h_{il}^{\delta}[k] = h_{il}[k]-h_{il}^*$ as deviation from $h_{il}^*$ at $k^{th}$ iteration. Hence we have
\begin{align}
h_{il}^{\delta}[k+1] =f^{\prime}(h_{il}^*)h_{il}^{\delta}[k]=\left[f^{\prime}(h_{il}^*)\right]^k h_{il}^{\delta}[0].\label{eq:lin2}
\end{align}

Thus \eqref{eq:lin12} is locally asymptotically stable if 
\begin{align}
|f^{\prime}(h_{il}^*)| < 1\label{eq:gamma_lstb1}.
\end{align}
Computing $f^{\prime}(h_{il}^*)$ from \eqref{eq:fun} and substituting in \eqref{eq:gamma_lstb1} yields 
{\footnotesize
\begin{align}
\zeta_{il} < \frac{2}{C_l^{\prime\prime}\left(\sum\limits_{j\in \mathcal{D}} h_{jl}^* - B_l\right)+\tilde{C}_i^{\prime\prime}\left(\sum\limits_{m=1}^{|p|}h_{im}^*-1\right) - \lambda_i^2\psi^{|p|-l}U_i^{\prime\prime}(\lambda_ih_{il}^*)}\label{eq:gamma_lstb}.
\end{align}
}
Note that, since the functions $C_l, \tilde{C}_i$ and $U_i$ are strictly convex, strictly convex and strictly concave functions respectively, $C_l^{\prime\prime}(x) < 0, \tilde{C}_i^{\prime\prime}(x) < 0$ and $ U_i^{\prime\prime}(x) < 0 \;\forall x \in \mathbb{R}^+.$ Hence the r.h.s of \eqref{eq:gamma_lstb} is strictly positive and for a sufficiently small positive step-size parameter $\zeta_{il}$, \eqref{eq:gamma_lstb} always holds. Thus the update rule \eqref{eq:primal} converges to $\boldsymbol h^*$ as long as $h_{il}^{(0)}$ is sufficiently close to $h_{il}^*$ for all $i \in \mathcal{D}$ and $l = 1,2,\cdots,|p|.$ 
\end{proof}

\begin{remark}
Note that the primal formulation in \eqref{eq:primal-mcdp} can be implemented distributively with respect to (w.r.t.) different contents and caches by some amount of book-keeping. For example in \eqref{eq:primal-mcdp-h}, $\sum_{m=1}^{|p|}h_{im}[k]$ at cache $l$ can be computed by first storing the value of $\sum_{m=1}^{|p|}h_{im}[k]$ at the edge cache in previous iteration and updating it during delivery of content $i$ (from cache $l$) to the user.
\end{remark}


%

\begin{figure}
  \begin{subfigure}[b]{0.49\columnwidth}
    \includegraphics[width=\linewidth]{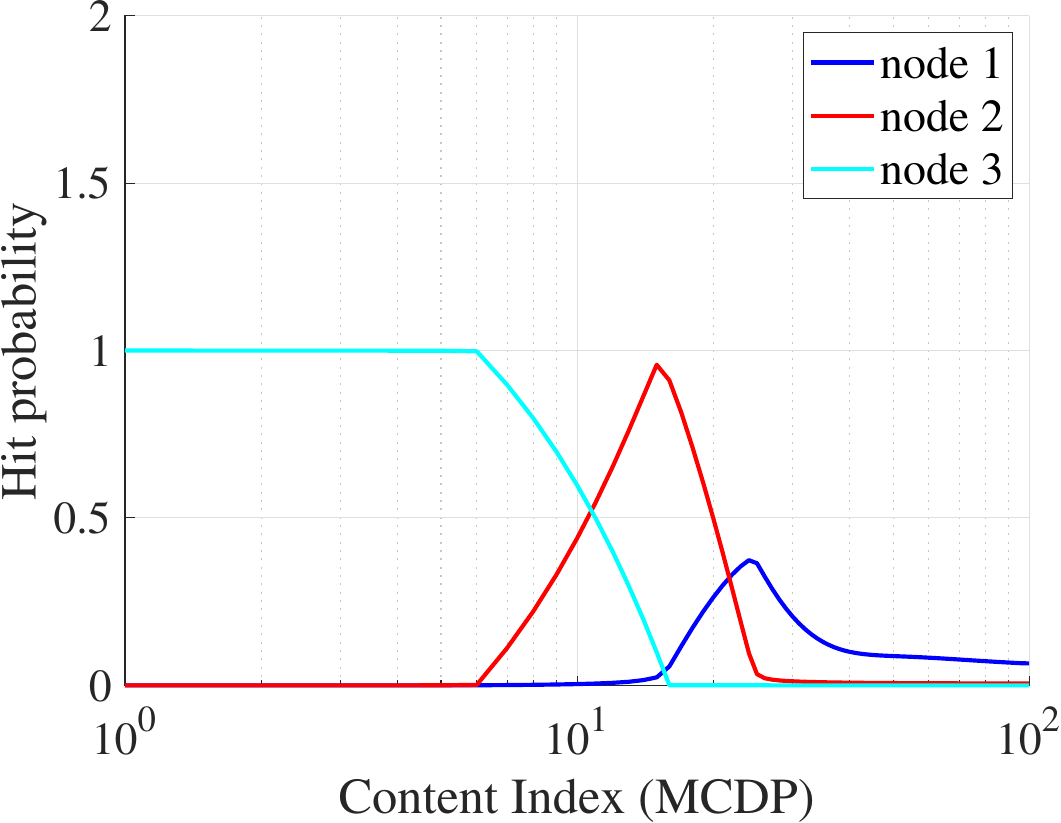}
\label{mcdphp-line1}
  \end{subfigure}
  \hfill 
  \begin{subfigure}[b]{0.49\columnwidth}
    \includegraphics[width=\linewidth]{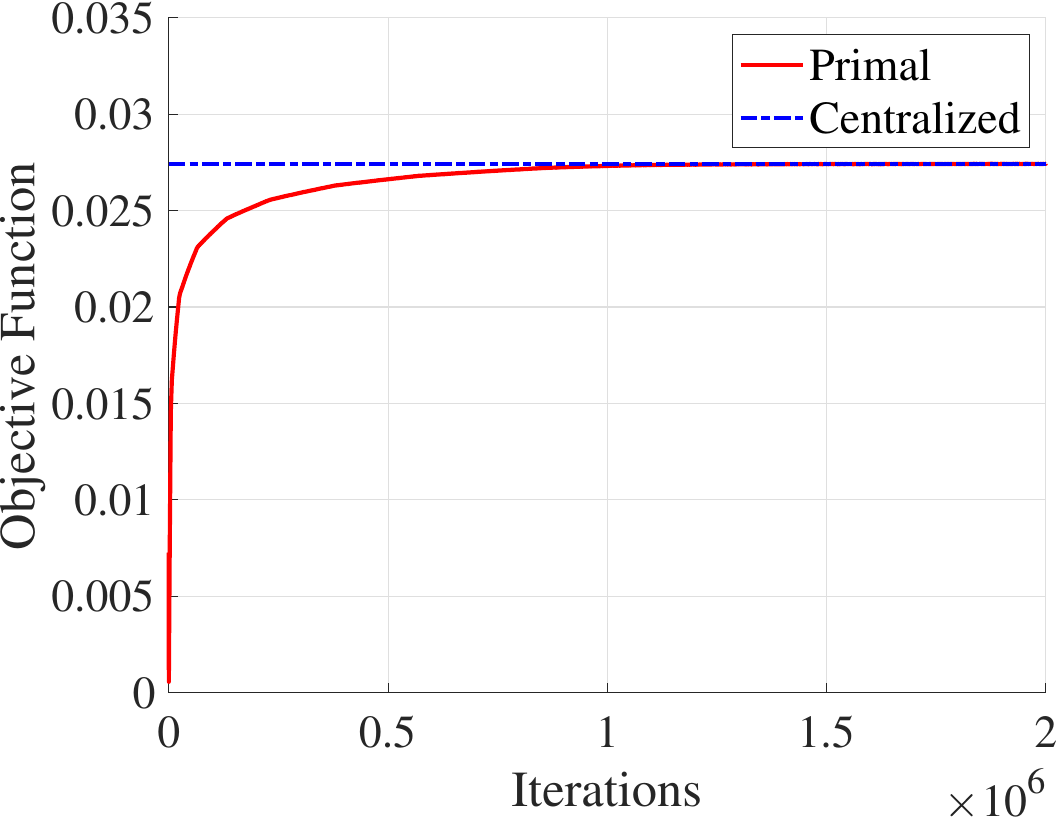}
\label{mcdpcs-line2}
  \end{subfigure}
  \caption{\textit{(Left)}: (a) Hit probability for MCDP in a three-node path; \textit{(Right)}: (b) Convergence of primal algorithm.}
  \label{mcdp-line}
  \vspace{-0.1in}
\end{figure}
\subsection{Model Validation and Insights}\label{sec:validations-line-cache}
We validate our analytical results with simulations for MCDP.  We consider a three-node path with cache capacities $B_l= 10$, $l=1, 2, 3.$  The total number of unique contents considered in the system is $n=100.$ We consider the Zipf popularity distribution with parameter $\alpha=0.8$.  W.l.o.g., we consider log based utility function\footnote{One can also choose $U_i(x) = \lambda_i\log x.$ However, $U_i(x)$ evaluated at $x=0$ becomes negative infinity, which may produce undesired results while comparing the performance of MCDP with other caching policies.} $U_i(x) = \lambda_i\log(1+x)$ \cite{vecer18}, and discount factor $\psi=0.1.$  We assume that requests arrive according to a Poisson process with aggregate request rate $\Lambda=1.$

We first solve optimization problem~(\ref{eq:max-mcdp}) using a Matlab routine \texttt{fmincon}.  From Figure~\ref{mcdp-line} (a), we observe that popular contents are assigned higher hit probabilities at cache node $3,$ i.e. at the edge cache closest to the user as compared to other caches. The optimal hit probabilities assigned to popular contents at other caches are almost negligible. However, the assignment is reversed for moderately popular contents. For non-popular contents, optimal hit probabilities at cache node $1$ (closest to origin server) are the highest.

We then implement our primal algorithm given in~(\ref{eq:primal-mcdp}), where we take the following penalty functions \cite{srikant13} $C_l(x)= \max\{0,x - B_l\log(B_l+x)\}$ and $\tilde{C}_i(x)= \max\{0,x - \log(1+x)\}$. Figure~\ref{mcdp-line} (b) shows that the primal algorithm successfully converges to the optimal solution. 

\begin{figure}
\centering
\includegraphics[width=0.8\columnwidth]{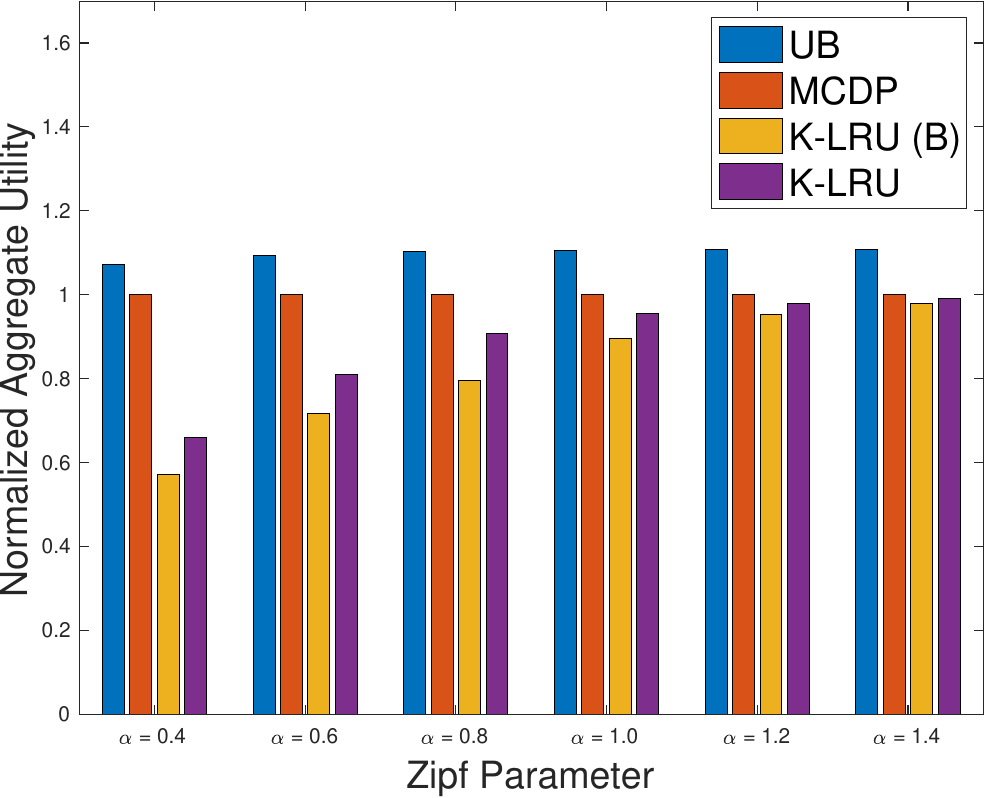}
\caption{Optimal aggregated utilities  in a three-node path.}
\label{mcdpcs-line3}
\vspace{-0.1in}
\end{figure}

\begin{figure}
\centering
\includegraphics[width=1\linewidth]{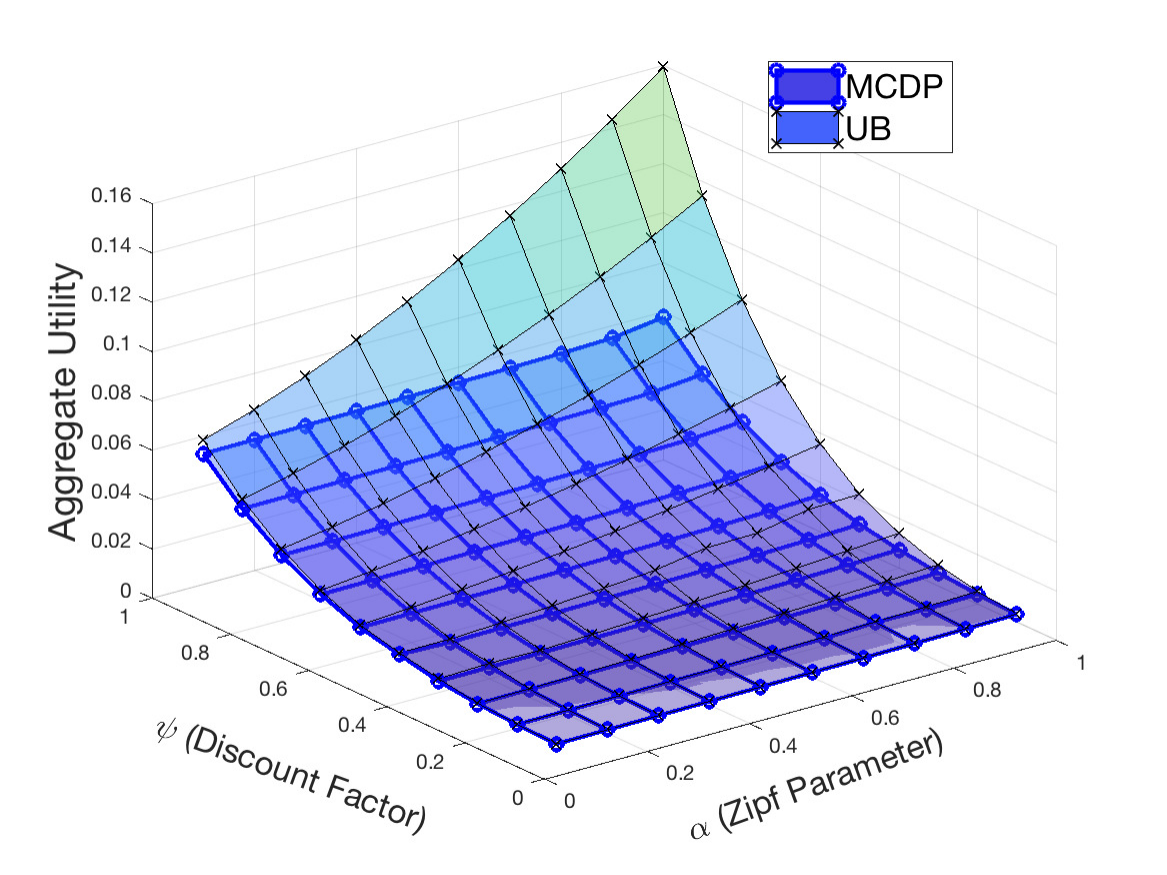}
\caption{Aggregate utility under MCDP and UB across different network parameters.}
\label{fig:line-comp-lcd}
\vspace{-0.2in}
\end{figure}

We also compare the performance of MCDP to other policies such as K-LRU (K=3), K-LRU with big cache abstraction: K-LRU(B) and the UB based bound.   We plot the relative performance w.r.t. the optimal aggregated utilities of all above policies, normalized to that under MCDP shown in Figure \ref{mcdpcs-line3}.   We observe that MCDP significantly outperforms K-LRU and K-LRU(B) for low and moderate values of Zipf parameter.  Furthermore, the performance gap between UB and MCDP increases in Zipf parameter.

Finally, we consider the effect of $\alpha$ and $\psi$ on the performance gap (difference between optimal aggregate utility) between UB and MCDP. We present the simulation results in  Figure \ref{fig:line-comp-lcd}.   Note that utility under both UB and MCDP increases in either $\alpha$ or $\psi$ or both.  When either $\alpha$ or $\psi$ is small, irrespective of the value of the other, the performance gap is minor or negligible.  However, the gap is considerably large when both $\alpha$ and $\psi$ are high. We believe, due to high communication overhead between successive layers of caches, $\psi$ has a small value, thus indicating a minor performance gap between MCDP and UB.

%
 

\section{General Edge Networks}\label{sec:general}

In Section~\ref{sec:line}, we consider linear edge networks and characterize the optimal TTL policy for content when coupled with MCDP.   
Inspired by these results, we consider general edge networks in this section. 

\subsection{Contents, Servers and Requests}
We consider the general edge network described in Section~\ref{sec:prelim}.  Denote by $\mathcal{P}$ the set of all requests, and $\mathcal{P}^i$ the set of requests for content $i.$   Suppose a cache in node $v$ serves two requests $(v_1, i_1, p_1)$ and $(v_2, i_2, p_2)$, then there are two cases: (i) non-common requested content, i.e., $i_1\neq i_2;$ and (ii) common requested content, i.e., $i_1=i_2.$

\subsection{Non-common Requested Content}\label{sec:non-common}
In this section, we consider the case that each network cache serves requests for different contents from each request $(v, i, p)$ passing through it.   
Since there is no coupling between different requests $(v, i, p),$ we can directly generalize the results for a particular path $p$ in Section~\ref{sec:line} to a tree network.  Hence, given the utility maximization formulation in~(\ref{eq:max-mcdp}), we can directly formulate the optimization problem for MCDP as 
\begin{subequations}\label{eq:max-mcdp-general}
\begin{align}
\max \quad&\sum_{i\in \mathcal{D}} \sum_{p\in\mathcal{P}^i}\sum_{l=1}^{|p|} \psi^{|p|-l} U_{ip}(\lambda_{ip}h_{il}^{(p)}) \displaybreak[0]\\
\text{s.t.} \quad&\sum_{i\in \mathcal{D}} \sum_{p:l\in\{1,\cdots,|p|\}}h_{il}^{(p)}\leq B_l,\quad\forall l\in V,   \displaybreak[1]\label{eq:hpbmcdp1-general}\\
& \sum_{l=1}^{|p|}h_{il}^{(p)}\leq 1,\quad\forall  i \in \mathcal{D}, p\in\mathcal{P}^i,\displaybreak[2]\label{eq:hpbmcdp2-general}\\
&0\leq h_{il}^{(p)}\leq1, \quad\forall i \in \mathcal{D}, l\in\{1,\cdots,|p|\},  p\in\mathcal{P}^i.
\end{align}
\end{subequations}

\begin{prop}
Since the feasible sets are convex and the objective function is strictly concave and continuous, the optimization problem defined in~(\ref{eq:max-mcdp-general}) under MCDP has a unique global optimum.
\end{prop}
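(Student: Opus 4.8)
The plan is to verify the three hypotheses of the standard result that a strictly concave continuous function over a nonempty compact convex set attains a unique maximum. First I would check that the feasible region is convex: each constraint in~(\ref{eq:max-mcdp-general}) is linear in the variables $h_{il}^{(p)}$ — the capacity constraints~(\ref{eq:hpbmcdp1-general}), the per-content sum constraints~(\ref{eq:hpbmcdp2-general}), and the box constraints $0\le h_{il}^{(p)}\le 1$ — so the feasible set is an intersection of finitely many closed half-spaces, hence a (closed) polytope, and in particular convex and compact. It is also nonempty, since $h_{il}^{(p)}=0$ for all indices is feasible.

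Next I would argue strict concavity of the objective. By assumption each $U_i$ (and hence each $U_{ip}$) is strictly concave and continuously differentiable on $[0,1]$, and the discount weights $\psi^{|p|-l}$ are strictly positive for $0<\psi\le 1$. A positive-weighted sum of strictly concave functions, each depending on a distinct coordinate $h_{il}^{(p)}$, is strictly concave in the full vector of variables: for any two distinct feasible points, at least one coordinate differs, and strict concavity in that coordinate term gives the strict inequality. Continuity is immediate since each $U_{ip}$ is continuous.

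Finally, combining these: a strictly concave continuous function on a nonempty compact convex set attains its maximum (by compactness plus continuity, i.e.\ Weierstrass), and the maximizer is unique (if two distinct maximizers existed, their midpoint would be feasible by convexity and would give a strictly larger value by strict concavity, a contradiction). This establishes the existence of a unique global optimum. I do not expect any real obstacle here — the work done earlier in Section~\ref{sec:general-cache-non-common} (showing the non-common-content case decouples into independent linear-network problems, each of which already has the convex structure of~(\ref{eq:max-mcdp})) is exactly what makes all constraints linear; the only mild point to note is that the objective is separable across the coordinates $h_{il}^{(p)}$, which is what upgrades concavity of each summand to strict concavity of the whole, and that the feasible set is bounded thanks to the box constraints, so Weierstrass applies.
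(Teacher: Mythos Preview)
Your proposal is correct and matches the paper's own justification, which is embedded directly in the proposition statement rather than given as a separate proof: linear (hence convex) constraints plus a strictly concave, continuous objective yield a unique global optimum. Your write-up simply spells out the standard Weierstrass-plus-strict-concavity argument that the paper leaves implicit, and the observation that separability across coordinates $h_{il}^{(p)}$ lifts termwise strict concavity to strict concavity of the full objective is exactly the right point to make.
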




\subsubsection{Model Validation and Insights}
{We consider a three-layer edge network shown in Figure~\ref{fig:model} with node set $\{1,\cdots, 7\}$, which is  consistent with the YouTube video delivery system \cite{ramadan17,sasikumar19}. Nodes $1$-$4$ are edge caches, and node $7$ is tertiary cache.} There exist four paths $p_1=\{1, 5, 7\},$ $p_2=\{2, 5, 7\},$ $p_3=\{3, 6, 7\}$ and  $p_4=\{4, 6, 7\}.$  Each edge cache serves requests for $100$ distinct contents, and cache size is $B_v=10$ for $v\in\{1,\cdots, 7\}.$  Assume that content follows a Zipf distribution with parameter $\alpha_1=0.2,$ $\alpha_2=0.4,$ $\alpha_3=0.6$ and $\alpha_4=0.8,$ respectively.  We consider utility function $U_{ip}(x) = \lambda_{ip}\log(1+x),$ where $\lambda_{ip}$ is the request arrival rate for content $i$ on path $p,$ and requests are described by a Poisson process with $\Lambda_p=1$ for $p=1, 2, 3, 4.$  The discount factor $\psi=0.1$.

Figure~\ref{7nodecache-non} shows results for path $p_4=\{4, 6, 7\}.$ From Figure~\ref{7nodecache-non} (Left), we observe that our algorithm yields the exact optimal and empirical hit probabilities under MCDP.  Figure~\ref{7nodecache-non} (Right) shows the probability density for the number of contents\footnote{The constraint~(\ref{eq:hpbmcdp1-general}) in problem~(\ref{eq:max-mcdp-general}) is on average cache occupancy.  However it can be shown that if $n\to\infty$ and $B_l$ grows in sub-linear manner, the probability of violating the target cache size $B_l$ becomes negligible \cite{dehghan16}.} in the edge network.  As expected, the density is concentrated around their corresponding cache sizes. Similar trends exist for paths $p_1$, $p_2$ and $p_3$, hence are omitted here.

\begin{figure}
  \begin{subfigure}[b]{0.49\columnwidth}
    \includegraphics[width=\linewidth]{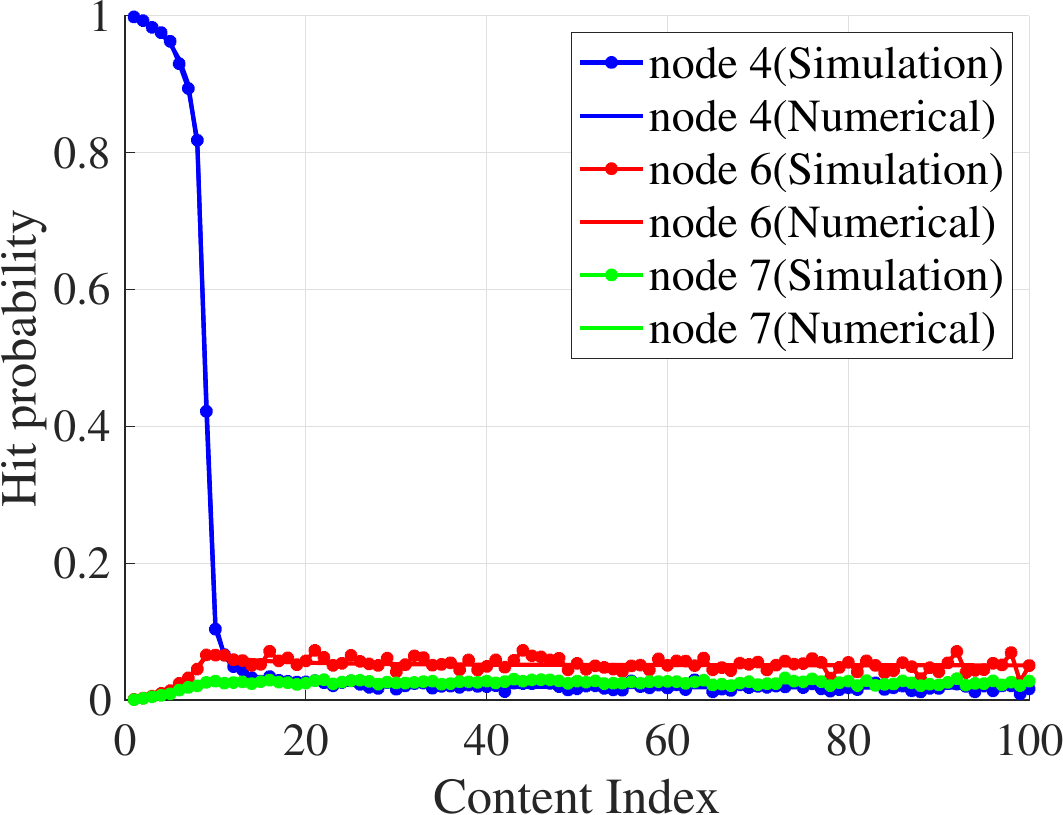}
\label{7nodecache-hit-non}
  \end{subfigure}
  \hfill 
  \begin{subfigure}[b]{0.49\columnwidth}
    \includegraphics[width=\linewidth]{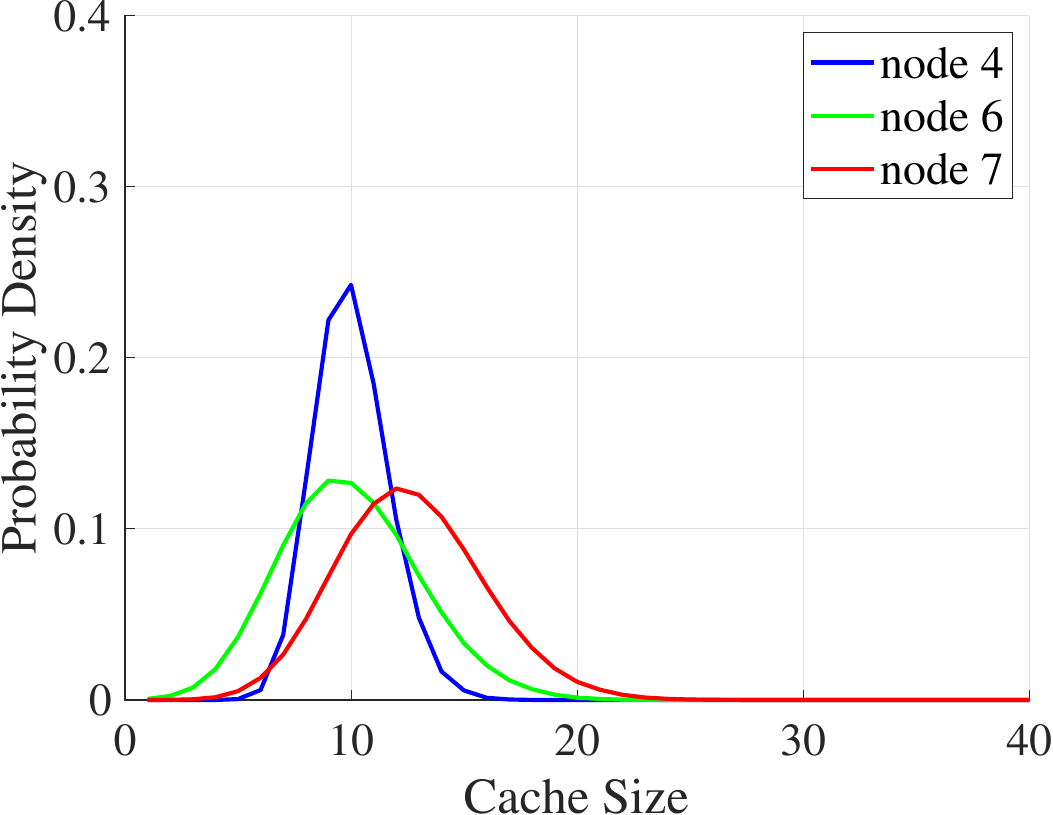}
\label{7nodecache-size-non}
  \end{subfigure}
  \vspace{-0.1in}
  \caption{\textit{(Left)}: (a) Hit probability; \textit{(Right)}: (b) Cache size, of MCDP under three-layer edge  network where each path requests distinct contents.}
  \label{7nodecache-non}
  \vspace{-0.1in}
\end{figure}

\subsection{Common Requested Contents}\label{sec:common}

Now consider the case where different users share the same content, e.g., there are two requests $(v_1, i, p_1)$ and $(v_2, i, p_2).$  Suppose that cache $l$ is on both paths $p_1$ and $p_2$, where $v_1$ and $v_2$ request the same content $i$.   If we cache separate copies on each path,  results from the previous section apply.  However, maintaining redundant copies in the same cache decreases efficiency.  A simple way to deal with that is to only cache one copy of content $i$ at $l$ to serve both requests from $v_1$ and $v_2.$ Though this reduces redundancy, it complicates the optimization problem.

In the following, we formulate a utility maximization problem for MCDP with TTL caches, where all users share the same requested contents $\mathcal{D}.$ 
\begin{subequations}\label{eq:max-mcdp-general-common}
\begin{align}
\max \quad&\sum_{i\in \mathcal{D}} \sum_{p\in \mathcal{P}^i} \sum_{l=1}^{|p|} \psi^{|p|-l} U_{ip}(\lambda_{ip}h_{il}^{(p)}) \displaybreak[0]\\
\text{s.t.} \quad& \sum_{i\in \mathcal{D}} \bigg(1-\prod_{p:j\in\{1,\cdots,|p|\}}(1-h_{ij}^{(p)})\bigg) \leq B_j,\quad\forall j \in V, \displaybreak[1]\label{eq:max-mcdp-genenral-cons1}\\
& \sum_{j\in\{1,\cdots,|p|\}}h_{ij}^{(p)}\leq 1,\quad \forall i \in \mathcal{D},  p \in \mathcal{P}^i, \displaybreak[2] \label{eq:max-mcdp-genenral-cons2}\\
&0\leq h_{il}^{(p)}\leq 1, \quad\forall i\in\mathcal{D},  j\in\{1,\cdots,|p|\},  p\in\mathcal{P}^i,\label{eq:max-mcdp-genenral-cons3}
\end{align}
\end{subequations}
where~(\ref{eq:max-mcdp-genenral-cons1}) ensures that only one copy of content $i\in\mathcal{D}$ is cached at node $j$ for all paths $p$ that pass through node $j$. This is because the term $1-\prod_{p:j\in\{1,\cdots,|p|\}}(1-h_{ij}^{(p)})$ is the overall hit probability of content $i$ at node $j$ over all paths. ~(\ref{eq:max-mcdp-genenral-cons2}) is the cache capacity constraint and~(\ref{eq:max-mcdp-genenral-cons3}) is the constraint from MCDP TTL cache policy as discussed in Section~\ref{sec:ttl-hit-prob}.

\begin{example}\label{exm}
Consider two requests $(v_1, i, p_1)$ and $(v_2, i, p_2)$ with paths $p_1$ and $p_2$ intersecting at $j.$  Let the corresponding path perspective hit probability be $h_{ij}^{(p_1)}$ and $h_{ij}^{(p_2)}$.  Then the term inside outer summation of~(\ref{eq:max-mcdp-genenral-cons1}) is $1-(1-h_{ij}^{(p_1)})(1-h_{ij}^{(p_2)})$, i.e., the hit probability of content $i$ in $j$. 
\end{example}

\begin{remark}
Note that we assume independence between different requests $(v, i, p)$ in~(\ref{eq:max-mcdp-general-common}), e.g.,  in Example~\ref{exm}, if the insertion of content $i$ in node $j$ is caused by request $(v_1, i, p_1),$ when request $(v_2, i, p_2)$ comes, it is not counted as a cache hit from its perspective.   Our framework still holds if we follow the logical TTL MCDP on a path.  However, in that case, the utilities will be larger than the one we consider here. 
\end{remark}


\begin{prop}
Since the feasible sets are non-convex, ~(\ref{eq:max-mcdp-general-common}) under MCDP is a non-convex optimization problem.
\end{prop}

In the following, we develop an optimization framework that handles the non-convexity issue in this optimization problem and provides a distributed solution.  To this end, we first introduce the Lagrangian function
\begin{align}\label{eq:lagrangian}
L(\boldsymbol{h,\nu,\mu})&=\sum_{i \in \mathcal{D}}\sum_{p \in \mathcal{P}^i}\sum_{l=1}^{|p|}\psi^{|p|-l}U_{ip}(\lambda_{ip}h_{il}^{(p)})\nonumber\displaybreak[0]\\
&-\sum_{j\in V}\nu_{j}\Bigg(\sum_{i \in \mathcal{D}}\bigg[1-\prod_{p:j\in\{1,\cdots,|p|\}}(1-h_{ij}^{(p)})\bigg]-B_j\Bigg)\nonumber\displaybreak[1]\\
&-\sum_{i\in\mathcal{D}}\sum_{p\in\mathcal{P}^i}\mu_{ip}\Bigg(\sum_{j\in\{1,\cdots,|p|\}}h_{ij}^{(p)}-1\Bigg),
\end{align}
where the Lagrangian multipliers (price vector and price matrix) are $\boldsymbol\nu=(\nu_{j})_{j\in V},$ and $\boldsymbol \mu=(\mu_{ip})_{i\in\mathcal{D}, p\in\mathcal{P}}.$  Constraint \eqref{eq:max-mcdp-genenral-cons3} is ignored in the Lagrangian function due to the same reason stated for the primal formulation in Section \ref{sec:line-online-primal}.

The dual function can be defined as 
\begin{align}\label{eq:dual}
d(\boldsymbol{\nu,\mu})=\sup_{\boldsymbol h} L(\boldsymbol{h,\nu,\mu}),
\end{align}
and the dual problem is given as
\begin{align}\label{eq:dual-opt}
\min_{\boldsymbol{ \nu,\mu}} \quad&d(\boldsymbol{\nu,\mu})=L(\boldsymbol h^*(\boldsymbol{\nu,\mu}), \boldsymbol{\nu,\mu}),\quad\text{s.t.}\quad\boldsymbol{\nu,\mu}\geq \boldsymbol{0},
\end{align}
where the constraint is defined pointwise for $\boldsymbol{\nu,\mu},$ and $\boldsymbol h^*(\boldsymbol{\nu,\mu})$ is a function that maximizes the Lagrangian function for given $(\boldsymbol{\nu,\mu}),$ i.e.,
\begin{align}\label{eq:dual-opt-h}
\boldsymbol h^*(\boldsymbol{\nu,\mu})=\arg\max_{\boldsymbol h}L(\boldsymbol{h,\nu,\mu}).
\end{align}

The dual function $d(\boldsymbol{\nu,\mu})$ is always convex in $(\boldsymbol{\nu,\mu})$ regardless of the convexity of the optimization problem~(\ref{eq:max-mcdp-general-common}) \cite{boyd04}.  Therefore, it is always possible to iteratively solve the dual problem using
\begin{align}\label{eq:dual-opt-lambda}
&\nu_l[k+1]=\nu_l[k]-\gamma_l\frac{\partial L(\boldsymbol{\nu,\mu})}{\partial \nu_l},\nonumber\displaybreak[0]\\
&\mu_{ip}[k+1]= \mu_{ip}[k]-\eta_{ip}\frac{\partial L(\boldsymbol{\nu,\mu})}{\partial \mu_{ip}},
\end{align}
where $\gamma_l$ and $\eta_{ip}$ are the step sizes, and $\frac{\partial L(\boldsymbol{\nu,\mu})}{\partial \nu_l}$ and $\frac{\partial L(\boldsymbol{\nu,\mu})}{\partial \mu_{ip}}$ are the partial derivative of $L(\boldsymbol{\nu,\mu})$ w.r.t. $ \nu_l$ and $\mu_{ip},$ respectively, satisfying
\begin{align}\label{eq:gradient-lambda-mu}
\frac{\partial L(\boldsymbol{\nu,\mu})}{\partial \nu_l}
&=-\bigg(\sum_{i \in \mathcal{D}}\bigg[1-\prod_{p:l\in\{1,\cdots,|p|\}}(1-h_{il}^{(p)})\bigg]-B_l\bigg),\nonumber\displaybreak[0]\\
\frac{\partial L(\boldsymbol{\nu,\mu})}{\partial \mu_{ip}} &= -\Bigg(\sum_{j\in\{1,\cdots,|p|\}}h_{ij}^{(p)}-1\Bigg).
\end{align}

Sufficient and necessary conditions for the uniqueness of $\boldsymbol{\nu,\mu}$ are given in \cite{kyparisis85}.  The convergence of primal-dual algorithm consisting of~(\ref{eq:dual-opt-h}) and~(\ref{eq:dual-opt-lambda}) is guaranteed if the original optimization problem is convex. However, our problem is not convex. Nevertheless, we next show that the duality gap is zero, hence~(\ref{eq:dual-opt-h}) and~(\ref{eq:dual-opt-lambda}) converge to the globally optimal solution.  To begin with, we introduce the following results 

\begin{theorem}\label{thm:sufficient}
\cite{tychogiorgos13} (Sufficient Condition). If the price based function $\boldsymbol h^*(\boldsymbol{\nu,\mu})$ is continuous at one or more of the optimal lagrange multiplier vectors $\boldsymbol \nu^*$ and $\boldsymbol \mu^*$, then the iterative algorithm consisting of~(\ref{eq:dual-opt-h}) and~(\ref{eq:dual-opt-lambda}) converges to the globally optimal solution. 
\end{theorem}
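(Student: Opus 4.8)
The plan is to establish the result by the standard route for Lagrangian duality of a (possibly) non-convex program, with the continuity hypothesis supplying exactly what convexity of the primal would otherwise give. Write $p^*$ for the optimal value of \textbf{G-U-MCDP} in~(\ref{eq:max-mcdp-general-common}) and recall from~(\ref{eq:dual}) that $d(\boldsymbol{\nu,\mu})=\sup_{\boldsymbol h}L(\boldsymbol{h,\nu,\mu})$ with $L$ given in~(\ref{eq:lagrangian}). First I would record weak duality: for every $\boldsymbol{\nu,\mu}\geq\boldsymbol 0$ and every primal-feasible $\boldsymbol h$, the two penalty sums in~(\ref{eq:lagrangian}) are nonnegative, so $L(\boldsymbol{h,\nu,\mu})\geq\sum_{i,p,l}\psi^{|p|-l}U_{ip}(h_{il}^{(p)})$, and taking the supremum over feasible $\boldsymbol h$ gives $d(\boldsymbol{\nu,\mu})\geq p^*$; in particular the dual optimum $d^*$ of~(\ref{eq:dual-opt}) satisfies $d^*\geq p^*$. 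Next I would use the fact noted after~(\ref{eq:dual-opt-h}) (cf.\ \cite{boyd04}) that $d$ is convex on the nonnegative orthant irrespective of the non-convexity of~(\ref{eq:max-mcdp-general-common}), so the dual problem~(\ref{eq:dual-opt}) is a convex program and any minimizer $(\boldsymbol{\nu}^*,\boldsymbol{\mu}^*)$ is characterized by the first-order optimality (KKT) conditions for minimization over $\{\boldsymbol{\nu,\mu}\geq\boldsymbol 0\}$.

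The crux is to convert the continuity hypothesis into differentiability of $d$ at $(\boldsymbol{\nu}^*,\boldsymbol{\mu}^*)$. Since $\boldsymbol h$ ranges over the compact box $[0,1]^{(\cdot)}$ and $L$ is jointly continuous, the supremum defining $d$ is attained, and the hypothesis that the price-based map $\boldsymbol h^*(\boldsymbol{\nu,\mu})$ of~(\ref{eq:dual-opt-h}) is a single-valued continuous function near an optimal dual point $(\boldsymbol{\nu}^*,\boldsymbol{\mu}^*)$ forces the maximizer to be unique there. A Danskin-type argument then shows $d$ is differentiable at $(\boldsymbol{\nu}^*,\boldsymbol{\mu}^*)$ with gradient equal to the partial derivatives of $L$ in the multipliers evaluated at $\boldsymbol h^*(\boldsymbol{\nu}^*,\boldsymbol{\mu}^*)$, i.e.\ the negated constraint slacks of~(\ref{eq:gradient-lambda-mu}). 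Substituting this into the dual KKT conditions yields componentwise either $\nu_j^*=0$ with the $j$-th capacity constraint~(\ref{eq:max-mcdp-genenral-cons1}) slack, or $\nu_j^*>0$ with it tight, and similarly each $\mu_{ip}^*$ against~(\ref{eq:max-mcdp-genenral-cons2}); hence $\boldsymbol h^*:=\boldsymbol h^*(\boldsymbol{\nu}^*,\boldsymbol{\mu}^*)$ is primal feasible and satisfies complementary slackness.

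With feasibility and complementary slackness secured, the gap closes: $d^*=L(\boldsymbol h^*,\boldsymbol{\nu}^*,\boldsymbol{\mu}^*)=\sum_{i,p,l}\psi^{|p|-l}U_{ip}(h_{il}^{(p)*})$ because every penalty term vanishes, and the right-hand side is the primal objective at the feasible point $\boldsymbol h^*$, hence at most $p^*$. Together with $d^*\geq p^*$ this gives $p^*=d^*$ and identifies $\boldsymbol h^*$ as a global optimum of \textbf{G-U-MCDP}, i.e.\ the duality gap is zero. For the algorithm itself, the dual updates~(\ref{eq:dual-opt-lambda}) are a (projected) (sub)gradient method on the convex function $d$, so for sufficiently small step sizes $\gamma_l,\eta_{ip}$ the iterates $(\boldsymbol{\nu}[k],\boldsymbol{\mu}[k])$ converge to the set of dual optima (to a unique minimizer under the conditions of \cite{kyparisis85}); continuity of $\boldsymbol h^*(\cdot)$ then propagates this to the primal iterates $\boldsymbol h[k]=\boldsymbol h^*(\boldsymbol{\nu}[k],\boldsymbol{\mu}[k])\to\boldsymbol h^*$, the globally optimal solution.

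I expect the middle step to be the real obstacle: because the products $\prod_{p}(1-h_{ij}^{(p)})$ in~(\ref{eq:max-mcdp-genenral-cons1}) make $L(\cdot,\boldsymbol{\nu,\mu})$ non-concave in $\boldsymbol h$, neither uniqueness of the Lagrangian maximizer nor differentiability of $d$ is free, and one must argue carefully that continuity of $\boldsymbol h^*(\cdot)$ at the dual optimum is precisely what collapses the subdifferential of $d$ there --- a priori the convex hull of the constraint-slack vectors over all maximizers --- to the single vector needed for the Danskin gradient identity; once that is in place, weak duality, the KKT translation, and the final sandwich inequality are routine.
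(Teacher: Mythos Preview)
The paper does not prove this theorem: it is quoted verbatim from \cite{tychogiorgos13} as an external result (hence the citation in the theorem heading), and the paper's own contribution is only to \emph{verify its hypotheses} for the specific problem~(\ref{eq:max-mcdp-general-common}) via Lemma~\ref{lem:active} and the continuity argument that follows. So there is no in-paper proof to compare against.

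That said, your outline is a sound reconstruction of the argument behind the cited result: weak duality, convexity of $d$ as a pointwise supremum of affine functions, a Danskin-type identification of $\nabla d(\boldsymbol\nu^*,\boldsymbol\mu^*)$ with the negated constraint slacks at the unique Lagrangian maximizer (uniqueness being exactly what continuity of $\boldsymbol h^*(\cdot)$ at $(\boldsymbol\nu^*,\boldsymbol\mu^*)$ supplies), then reading off primal feasibility and complementary slackness from the dual KKT conditions to close the gap. One caution on the convergence claim: the updates~(\ref{eq:dual-opt-lambda}) are a subgradient method, and with a fixed step size a subgradient method in general only guarantees convergence to a neighborhood of the dual optimum, not to the optimum itself; to get convergence of $(\boldsymbol\nu[k],\boldsymbol\mu[k])$ to $(\boldsymbol\nu^*,\boldsymbol\mu^*)$ you either need diminishing step sizes or, as you implicitly use, differentiability of $d$ in a neighborhood of the optimum so that the method is a genuine gradient descent there. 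Your final paragraph correctly flags that the Danskin step is where the non-concavity of $L(\cdot,\boldsymbol\nu,\boldsymbol\mu)$ in $\boldsymbol h$ (due to the products in~(\ref{eq:max-mcdp-genenral-cons1})) makes the argument nontrivial, and that continuity of $\boldsymbol h^*$ is precisely the hypothesis that collapses the subdifferential.
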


\begin{theorem}\label{thm:necessary}
\cite{tychogiorgos13} If at least one constraint of~(\ref{eq:max-mcdp-general-common}) is active at the optimal solution, the condition in Theorem~\ref{thm:sufficient} is also a necessary condition. 
\end{theorem}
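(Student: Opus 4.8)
The plan is to establish the necessity direction by contraposition, using the subgradient structure of the dual. Since $d(\boldsymbol{\nu,\mu})=\sup_{\boldsymbol h}L(\boldsymbol{h,\nu,\mu})$ is convex in $(\boldsymbol{\nu,\mu})$ (as noted after~\eqref{eq:dual-opt-h}) and the supremum in~\eqref{eq:dual} is attained at $\boldsymbol h^*(\boldsymbol{\nu,\mu})$, an envelope/Danskin-type argument shows that the vector formed by the constraint residuals of~\eqref{eq:gradient-lambda-mu} evaluated at $\boldsymbol h^*(\boldsymbol{\nu,\mu})$ is a subgradient of $d$ at $(\boldsymbol{\nu,\mu})$ --- this is precisely the quantity driving the update~\eqref{eq:dual-opt-lambda}. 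Optimality of $(\boldsymbol\nu^*,\boldsymbol\mu^*)$ for the dual~\eqref{eq:dual-opt}, together with complementary slackness, then forces the residual vector associated with the primal optimum $\boldsymbol h^{\mathrm{opt}}$ --- which is unique by the strict concavity of the objective invoked in the Propositions above --- to be a (projected) zero subgradient at $(\boldsymbol\nu^*,\boldsymbol\mu^*)$. So continuity of $\boldsymbol h^*(\cdot)$ at $(\boldsymbol\nu^*,\boldsymbol\mu^*)$ would make this zero subgradient the local limit of the subgradients used by~\eqref{eq:dual-opt-lambda}, which is exactly the content of Theorem~\ref{thm:sufficient}.

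Next I would assume, toward a contradiction, that $\boldsymbol h^*(\cdot)$ is discontinuous at \emph{every} optimal multiplier pair $(\boldsymbol\nu^*,\boldsymbol\mu^*)$ while the iteration~\eqref{eq:dual-opt-h}--\eqref{eq:dual-opt-lambda} nonetheless converges to the globally optimal solution, and that some constraint of~\eqref{eq:max-mcdp-general-common} is active at $\boldsymbol h^{\mathrm{opt}}$. Discontinuity yields sequences $(\boldsymbol\nu^{(k)},\boldsymbol\mu^{(k)})\to(\boldsymbol\nu^*,\boldsymbol\mu^*)$ along which $\boldsymbol h^*(\boldsymbol\nu^{(k)},\boldsymbol\mu^{(k)})$ stays bounded away from $\boldsymbol h^{\mathrm{opt}}$; by~\eqref{eq:gradient-lambda-mu} the subgradients felt arbitrarily close to $(\boldsymbol\nu^*,\boldsymbol\mu^*)$ then do not all shrink toward the zero-residual vector of $\boldsymbol h^{\mathrm{opt}}$. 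Here the active-constraint hypothesis is essential: since that constraint is tight at $\boldsymbol h^{\mathrm{opt}}$, the jump in $\boldsymbol h^*$ translates into a genuine jump of the corresponding residual component between strictly positive and non-positive values near $(\boldsymbol\nu^*,\boldsymbol\mu^*)$, so the discontinuity is not confined to a region irrelevant to the optimum --- in particular it rules out the degenerate case in which all constraints are slack, where $(\boldsymbol\nu^*,\boldsymbol\mu^*)=\boldsymbol 0$ and $\boldsymbol h^*(\boldsymbol 0,\boldsymbol 0)=\boldsymbol h^{\mathrm{opt}}$ trivially. Consequently the primal recovery $\boldsymbol h^*(\boldsymbol\nu[k],\boldsymbol\mu[k])$ has at least two distinct limit points, at most one of which equals the unique $\boldsymbol h^{\mathrm{opt}}$, and no feasible averaging can rescue it because the alternative limit point violates the active constraint. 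This contradicts convergence to the global optimum, so continuity of $\boldsymbol h^*$ at the optimal multipliers is necessary, which combined with Theorem~\ref{thm:sufficient} gives the stated equivalence.

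The main obstacle is the passage from the \emph{abstract} discontinuity of $\boldsymbol h^*(\cdot)$ to the \emph{concrete} failure of the discrete recursion~\eqref{eq:dual-opt-lambda}: with diminishing step sizes $\gamma_l,\eta_{ip}$ the dual iterates $(\boldsymbol\nu[k],\boldsymbol\mu[k])$ may still converge to $(\boldsymbol\nu^*,\boldsymbol\mu^*)$ even though the subgradients do not vanish, so the contradiction cannot be read off the dual dynamics alone and must instead be derived from the primal recovery $\boldsymbol h^*(\boldsymbol\nu[k],\boldsymbol\mu[k])$. Making this rigorous requires (i) uniqueness of the primal optimum --- available from the strict concavity of the objective used in the Propositions above --- so that there is a single target the primal iterates would have to approach, and (ii) the active-constraint assumption, which guarantees that the discontinuity actually manifests in a constraint residual near $(\boldsymbol\nu^*,\boldsymbol\mu^*)$ rather than being masked by slackness. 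These are exactly the ingredients isolated in \cite{tychogiorgos13}, whose argument we follow; the rest is the standard convexity of $d$ and the subgradient characterization already recalled from \cite{boyd04}.
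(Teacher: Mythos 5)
First, note that the paper itself offers no proof of Theorem~\ref{thm:necessary}: like Theorem~\ref{thm:sufficient}, it is imported verbatim from \cite{tychogiorgos13} and used as a black box, so there is no in-paper argument to compare yours against; your sketch can only be judged on its own terms, and as it stands it has a genuine gap. The step you yourself flag as ``the main obstacle'' is in fact the crux, and it is left open. Pointwise discontinuity of $\boldsymbol h^*(\cdot)$ at $(\boldsymbol\nu^*,\boldsymbol\mu^*)$ only guarantees the existence of \emph{some} sequence $(\boldsymbol\nu^{(k)},\boldsymbol\mu^{(k)})\to(\boldsymbol\nu^*,\boldsymbol\mu^*)$ along which $\boldsymbol h^*$ fails to converge to $\boldsymbol h^*(\boldsymbol\nu^*,\boldsymbol\mu^*)$; it says nothing about the particular trajectory generated by \eqref{eq:dual-opt-h}--\eqref{eq:dual-opt-lambda}, and it does not by itself imply $\boldsymbol h^*(\boldsymbol\nu^*,\boldsymbol\mu^*)\neq\boldsymbol h^{\mathrm{opt}}$. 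To close the contradiction one must show that discontinuity at \emph{every} dual optimum forces a positive duality gap, i.e., that no maximizer of $L(\cdot,\boldsymbol\nu^*,\boldsymbol\mu^*)$ is simultaneously primal feasible and complementary-slack; that is where the active-constraint hypothesis genuinely enters, rather than through the asserted (but underived) claim that the ``jump'' of $\boldsymbol h^*$ must show up in the residual of the active constraint. As written, your argument would equally ``prove'' failure of the algorithm in cases where the discontinuity lies entirely in directions irrelevant to the dual iterates.

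A second, more local error: you invoke uniqueness of the primal optimum ``by the strict concavity of the objective invoked in the Propositions above.'' Strict concavity yields a unique maximizer over a \emph{convex} feasible set, but the feasible set of \eqref{eq:max-mcdp-general-common} is nonconvex (that is the whole point of this section), and a strictly concave function can attain its global maximum over a nonconvex set at several points; the uniqueness propositions in the paper apply only to the convex formulations. Either uniqueness of $\boldsymbol h^{\mathrm{opt}}$ must be argued separately or the contradiction must be restructured so as not to rely on it. Since the theorem is a quoted external result, the cleanest fix is simply to present it as such and defer to the proof in \cite{tychogiorgos13}; if you do want a self-contained proof, the duality-gap route sketched above is the one to make rigorous.
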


Hence, if we can show the continuity of $\boldsymbol h^*(\boldsymbol{\nu,\mu})$ and that constraints~(\ref{eq:max-mcdp-general-common}) are active, then given Theorems~\ref{thm:sufficient} and~\ref{thm:necessary}, the duality gap is zero, i.e., ~(\ref{eq:dual-opt-h}) and~(\ref{eq:dual-opt-lambda}) converge to the globally optimal solution.

Take the derivative of $L(\boldsymbol{h,\nu,\mu})$ w.r.t. $h_{il}^{(p)}$ for $i\in\mathcal{D},$ $l\in\{1,\cdots,|p|\}$ and $p\in\mathcal{P}^i$, we have 
\begin{align}\label{eq:gradient}
\frac{\partial L(\boldsymbol{h,\nu,\mu})}{\partial h_{il}^{(p)}}&=\psi^{|p|-l}\lambda_{ip}U'_{ip}(\lambda_{ip}h_{il}^{(p)})-\mu_{ip}\nonumber\displaybreak[0]\\
&\qquad-\nu_{l}\Bigg(\prod_{\substack{q: q\neq p,\\ j\in\{1,\cdots, |q|\}}}(1-h_{ij}^{(q)})\Bigg).
\end{align}
Setting~(\ref{eq:gradient}) equal to zero, we obtain 
\begin{align}\label{eq:gradient1}
U'_{ip}(\lambda_{ip}h_{il}^{(p)})=\frac{1}{\psi^{|p|-l}\lambda_{ip}}\Bigg(\nu_{l}\Bigg(\prod_{\substack{q: q\neq p,\\ j\in\{1,\cdots, |q|\}}}(1-h_{ij}^{(q)})\Bigg)+\mu_{ip}\Bigg).
\end{align}


\begin{figure*}[htbp]
\centering
\begin{minipage}{.32\textwidth}
\centering
\includegraphics[width=\linewidth]{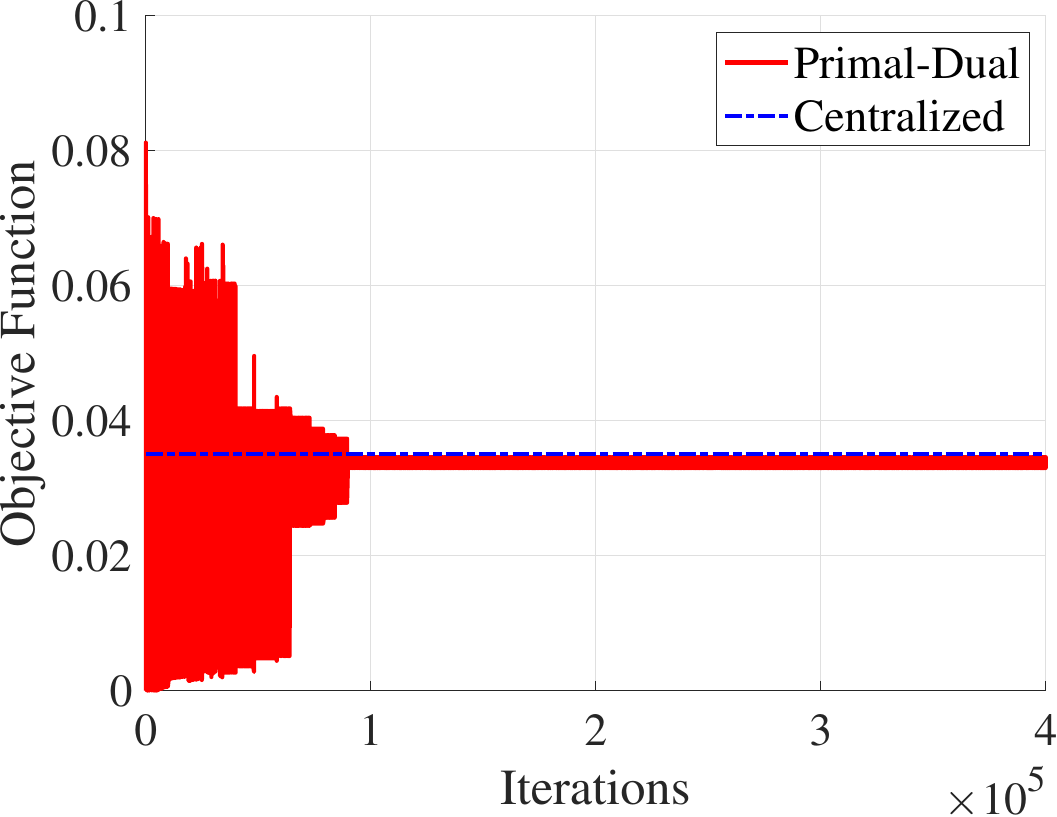}
\caption{Convergence of Primal-Dual algorithm.}
\label{fig:conv-primal-dual}
\end{minipage}\hfill
\begin{minipage}{.32\textwidth}
\centering
\includegraphics[width=\linewidth]{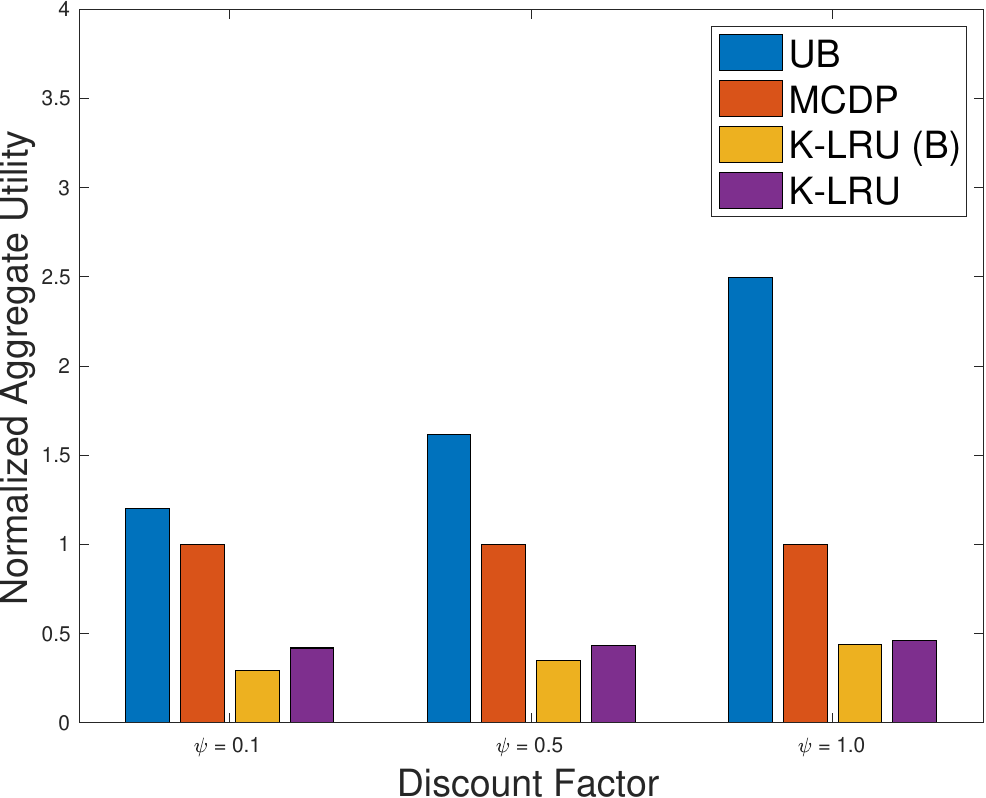}
\caption{Optimal aggregated utilities under common requested contents.}
\label{fig:policy-comp-tree}
\end{minipage}\hfill
\begin{minipage}{.32\textwidth}
\centering
\includegraphics[width=\linewidth]{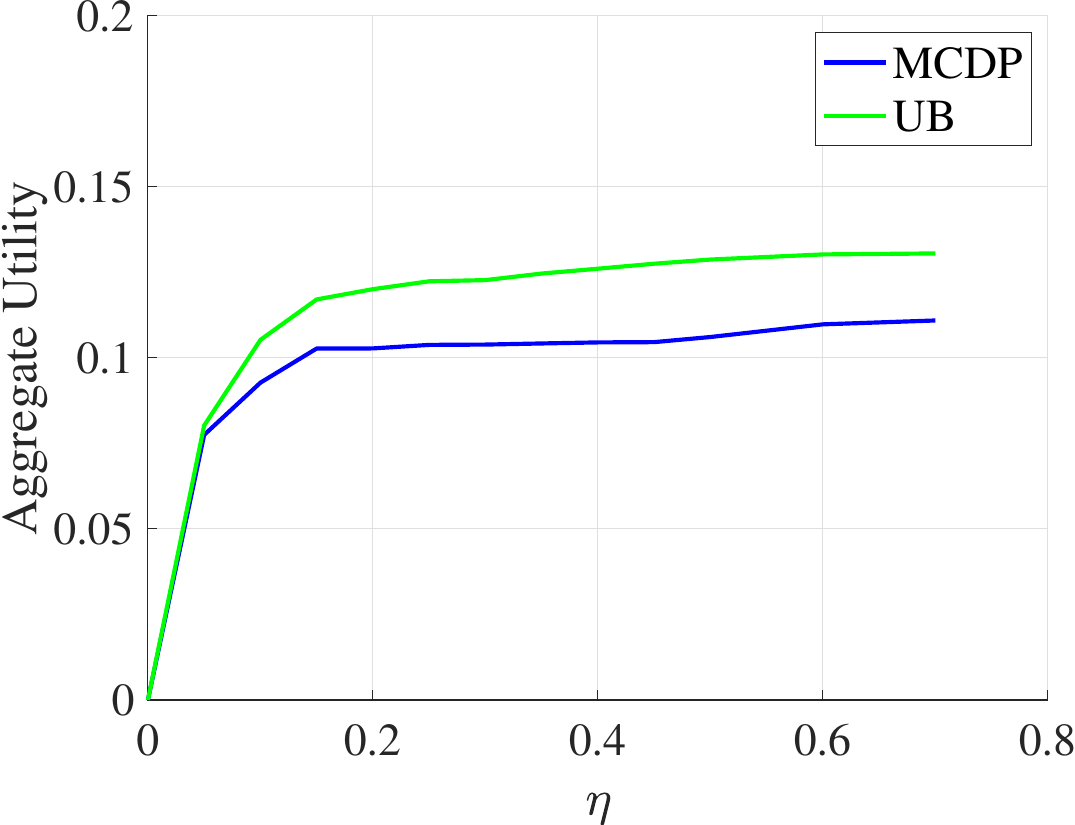}
\caption{$\eta (= B_l/n)$ vs. overall objective under a three-layer edge network.}
\label{fig:capVsutil}
\end{minipage}\hfill
\vspace{-0.1in}
\end{figure*}

Consider the utility function $U_{ip}(\lambda_{ip}h_{il}^{(p)})=w_{ip} \log(1+\lambda_{ip}h_{il}^{(p)})$, then $U'_{ip}(\lambda_{ip}h_{il}^{(p)})=w_{ip}/(1+\lambda_{ip}h_{il}^{(p)})$. Hence, from~(\ref{eq:gradient1}), we have
\begin{align}\label{eq:gradient2}
h_{il}^{(p)}=\frac{w_{ip}\psi^{|p|-l}}{\nu_{l}\Bigg(\prod_{\substack{q: q \neq p,\\ j\in\{1,\cdots, |q|\}}}(1-h_{ij}^{(q)})\Bigg)+\mu_{ip}} - \frac{1}{\lambda_{ip}}.
\end{align}

\begin{lemma}\label{lem:active}
Constraints~(\ref{eq:max-mcdp-genenral-cons1}) and~(\ref{eq:max-mcdp-genenral-cons2}) cannot be both non-active, i.e., at least one of them is active. 
\end{lemma}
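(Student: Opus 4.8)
The plan is to argue by contradiction: suppose, for some content $i$ and path $p\in\mathcal P^i$, that \emph{both} capacity constraints~(\ref{eq:max-mcdp-genenral-cons1}) (for every node $j$ on $p$) and~(\ref{eq:max-mcdp-genenral-cons2}) (for this $i,p$) are strictly non-active at the optimum $\boldsymbol h^*$. The idea is that if these constraints have slack, then the path-$p$ hit probabilities $(h_{i1}^{(p)},\dots,h_{i|p|}^{(p)})$ can all be increased by a common small factor $1+\epsilon$ without violating feasibility, and since each $U_{ip}$ is strictly increasing this strictly increases the objective — contradicting optimality. So the first step is to set up this perturbation carefully and check it stays in the feasible region described by~(\ref{eq:max-mcdp-genenral-cons1})--(\ref{eq:max-mcdp-genenral-cons3}).

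First I would fix attention on a single chain: pick any $i\in\mathcal D$ and $p\in\mathcal P^i$, and consider scaling $h_{il}^{(p)}\mapsto (1+\epsilon)h_{il}^{(p)}$ for all $l\in\{1,\dots,|p|\}$, leaving all other variables (other paths, other contents) untouched. Then I would verify the three constraint families in turn. Constraint~(\ref{eq:max-mcdp-genenral-cons2}) for this $(i,p)$ reads $\sum_l h_{il}^{(p)}\le 1$; if it is strictly slack at $\boldsymbol h^*$, then for small enough $\epsilon>0$ the scaled sum $(1+\epsilon)\sum_l h_{il}^{*(p)}$ is still $<1$. The box constraint~(\ref{eq:max-mcdp-genenral-cons3}) $h_{il}^{(p)}\le 1$ is then automatically satisfied since each term is bounded by the sum. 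For the node-capacity constraints~(\ref{eq:max-mcdp-genenral-cons1}), note that at a node $j$ on path $p$ the relevant term $1-\prod_{q:j\in q}(1-h_{ij}^{(q)})$ is continuous and nondecreasing in $h_{ij}^{(p)}$; if the constraint at $j$ is strictly slack, it remains satisfied after a sufficiently small increase. Taking $\epsilon$ small enough to accommodate all (finitely many) node constraints along $p$ and the single sum constraint, the perturbed point is feasible, and the objective increases by $\sum_l \psi^{|p|-l}\bigl(U_{ip}((1+\epsilon)h_{il}^{*(p)})-U_{ip}(h_{il}^{*(p)})\bigr)>0$ because $U_{ip}$ is strictly increasing and at least one $h_{il}^{*(p)}$ is positive. (If all $h_{il}^{*(p)}=0$, then $\sum_l h_{il}^{(p)}=0<1$ and the corresponding timers can be taken finite; a small positive perturbation is still feasible and strictly improves the objective — unless $U_{ip}$ is unbounded below at $0$, in which case $\boldsymbol h^*$ could not have had a zero component anyway.)

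The main obstacle I anticipate is the boundary/degeneracy bookkeeping: handling the case where some $h_{il}^{*(p)}$ equals $0$ cleanly, and making sure the single scaling factor $1+\epsilon$ can be chosen uniformly small over all node constraints on $p$ — this is where one must use that $p$ is finite and each term in~(\ref{eq:max-mcdp-genenral-cons1}) depends continuously on the scaled variables. A secondary subtlety is that increasing $h_{ij}^{(p)}$ at a shared node $j$ also changes the node-$j$ term for content $i$, but \emph{only} makes the left side of~(\ref{eq:max-mcdp-genenral-cons1}) larger, so slack is all that is needed — there is no interaction with other contents' terms since they are summed independently. Once the feasible strictly-improving perturbation is exhibited, the contradiction with optimality of $\boldsymbol h^*$ is immediate, establishing that at least one of~(\ref{eq:max-mcdp-genenral-cons1}),~(\ref{eq:max-mcdp-genenral-cons2}) must be active.
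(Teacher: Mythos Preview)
Your argument is correct but takes a genuinely different route from the paper's. The paper works on the dual side: it supposes \emph{all} multipliers vanish simultaneously ($\boldsymbol\nu=\boldsymbol 0$, $\boldsymbol\mu=\boldsymbol 0$), observes that the Lagrangian stationarity condition then forces $h_{il}^{(p)}=1$ for every $i,l,p$ (since each $U_{ip}$ is strictly increasing), and notes that this point makes the left side of~(\ref{eq:max-mcdp-genenral-cons1}) equal to $|\mathcal D|>B_l$, a contradiction. Your argument is purely primal: you fix one pair $(i,p)$, assume the node-capacity constraints along $p$ and the path-sum constraint for $(i,p)$ are all strictly slack, and exhibit a feasible improving perturbation via uniform scaling by $1+\epsilon$. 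Your approach is more elementary in that it never invokes KKT stationarity, and it actually yields a sharper, localized conclusion---for \emph{each} $(i,p)$, at least one constraint touching that path must bind---whereas the paper's proof only rules out the global event that every constraint is slack. Indeed, if you refine your perturbation to a single coordinate $h_{il}^{(p)}$ rather than the whole path, the same argument gives the per-$(i,l,p)$ statement $\nu_l+\mu_{ip}\neq 0$ that the paper actually uses to define the region $\mathcal R$ immediately after the lemma.
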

\begin{proof}
We prove this lemma by contradiction. Suppose both constraints~(\ref{eq:max-mcdp-genenral-cons1}) and~(\ref{eq:max-mcdp-genenral-cons2}) are non-active, i.e., $\boldsymbol\nu=(\boldsymbol 0),$ and $\boldsymbol \mu=(\boldsymbol 0).$ Then the optimization problem~(\ref{eq:max-mcdp-general}) achieves its maximum when $h_{il}^{(p)}=1$ for all $i\in\mathcal{D},$ $l\in\{1,\cdots,|p|\}$ and $p\in\mathcal{P}^i.$  If so, then the left hand size of~(\ref{eq:max-mcdp-genenral-cons1}) equals $|\mathcal{D}|$ which is much greater than $B_l$ for $l\in V,$ which is a contradiction. Hence, constraints~(\ref{eq:max-mcdp-genenral-cons1}) and~(\ref{eq:max-mcdp-genenral-cons2}) cannot be both non-active.
\end{proof}  
  
From Lemma~\ref{lem:active}, we know that the feasible region for the Lagrangian multipliers satisfies $\mathcal{R}=\{\nu_{l}\geq0,\mu_{ip}\geq0,\nu_{l}+\mu_{ip}\neq0, \forall i\in\mathcal{D},l\in\{1,\cdots,|p|\},p\in\mathcal{P}^i\}.$

\begin{theorem}  
The hit probability $h_{il}^{(p)}$ given in~(\ref{eq:gradient2}) is continuous in $\nu_{l}$ and $\mu_{ip}$ for all $i\in\mathcal{D},$ $l\in\{1,\cdots,|p|\}$ and $p\in\mathcal{P}^i$ in the feasible region $\mathcal{R}.$
\end{theorem}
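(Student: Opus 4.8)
The object to be shown continuous is the price-based maximizer $\boldsymbol h^{*}(\boldsymbol\nu,\boldsymbol\mu)$ of the Lagrangian~(\ref{eq:lagrangian}); at an interior point it is pinned by the stationarity conditions~(\ref{eq:gradient2}), and when a box constraint~(\ref{eq:max-mcdp-genenral-cons3}) binds it equals $\min\{1,\cdot\}$ of the right-hand side of~(\ref{eq:gradient2})---and a minimum of continuous maps is continuous, so that truncation is harmless. With log utilities $U'_{ip}(h)=w_i/h\to\infty$ as $h\to0^{+}$, so no coordinate of $\boldsymbol h^{*}$ is ever $0$, and near any maximizer $L$ may be treated as jointly continuous. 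The plan is: (i) use the decoupling structure of~(\ref{eq:gradient2}) to reduce to one ``cluster'' at a time; (ii) check that the denominators in~(\ref{eq:gradient2}) stay strictly positive on the feasible region $\mathcal R$; and (iii) turn the implicit cluster system into a single scalar equation to which the implicit function theorem applies.

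For (i), the equation for $h_{il}^{(p)}$, with $l$ the index of node $j=v_{lp}$ on $p$, involves only the variables $h_{ij}^{(q)}$ for the paths $q\in\mathcal P^{i}$ that also pass through $j$ (the multipliers $\mu_{ip},\nu_j$ are parameters here and cause no coupling), so $\boldsymbol h^{*}$ is a concatenation of ``cluster'' maps, one per pair $(i,j)$: writing $P=\{q\in\mathcal P^{i}:j\in q\}$, $x_{q}=h_{ij}^{(q)}$, $a_{q}=w_i\psi^{|q|-l}>0$, $b_{q}=\mu_{iq}\ge0$, the cluster equations read $x_{p}\big(\nu_{j}\prod_{q\in P,\,q\ne p}(1-x_{q})+b_{p}\big)=a_{p}$. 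For (ii), the denominator $D_{p}=\nu_j\prod_{q\ne p}(1-x_q)+b_p$ is positive on $\mathcal R$: if $\mu_{ip}>0$ then $D_{p}\ge\mu_{ip}>0$; if $\mu_{ip}=0$ then $\nu_j>0$ by Lemma~\ref{lem:active}, and since each $x_{q}=h_{ij}^{(q)}<1$ at the optimum (immediate from~(\ref{eq:max-mcdp-genenral-cons2}) together with strict positivity of all coordinates when $|q|\ge2$, and moot at a server node otherwise) the product is positive. So, conditional on continuity of the $x_{q}$'s, each $x_{p}=a_{p}/D_{p}$ is continuous, and the only real work is the fixed-point step (iii). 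When $|P|=1$, or when $\nu_j=0$ (in which case $\mathcal R$ forces all $\mu_{iq}>0$), the cluster decouples into $x_p=a_p/(\nu_j+\mu_{ip})$ resp.\ $x_p=a_p/\mu_{ip}$, trivially continuous. Otherwise, taking logarithms in the cluster equations gives $\log(1-x_{p})+\log(a_{p}/x_{p}-b_{p})=\log\nu_j+\sum_{q\in P}\log(1-x_{q})$, whose right side is the same for every $p$; calling it $\log t$, a short computation shows $x\mapsto(1-x)(a_{p}/x-b_{p})$ is strictly decreasing on the relevant interval, so $x_{p}=\phi_{p}^{-1}(t)$ is a smooth strictly decreasing function of $t$ and of $(a_{p},b_{p})$. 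Substituting back, $t$ solves the scalar equation $t=\nu_j\prod_{q\in P}(1-\phi_{q}^{-1}(t))$; a boundary/monotonicity check gives a unique root $t^{*}(\boldsymbol\nu,\boldsymbol\mu)$, and the implicit function theorem (the $t$-derivative of $t-\nu_j\prod_q(1-\phi_q^{-1}(t))$ is nonzero at $t^{*}$) gives continuity of $t^{*}$, hence of every $x_{p}=\phi_{p}^{-1}(t^{*})$, in $(\boldsymbol\nu,\boldsymbol\mu)$.

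The main obstacle is precisely the existence and uniqueness of the cluster fixed point---equivalently, uniqueness of the Lagrangian maximizer $\boldsymbol h^{*}(\boldsymbol\nu,\boldsymbol\mu)$. One cannot simply invoke strict concavity plus Berge's maximum theorem to pass from an upper-hemicontinuous solution correspondence to a continuous function, because $L$ is \emph{not} jointly concave in $\boldsymbol h$: the capacity terms $\prod_{p}(1-h_{ij}^{(p)})$ coming from~(\ref{eq:max-mcdp-genenral-cons1}) are only log-concave, not concave. The logarithmic reduction above is exactly what sidesteps the non-concavity; a viable alternative is to show the iteration $x_{p}\mapsto a_{p}/D_{p}$ is a contraction on $[0,1]^{|P|}$ in a suitable weighted norm, which would yield uniqueness and continuity in one stroke, but bounding the contraction factor uniformly on $\mathcal R$ looks more delicate than the change of variables.
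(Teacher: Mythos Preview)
Your argument is correct, but it is considerably more elaborate than what the paper actually does. The paper's proof is a three-line case analysis: by Lemma~\ref{lem:active} at least one of $\nu_l,\mu_{ip}$ is nonzero on $\mathcal R$; in each of the three cases ($\nu_l\neq0,\mu_{ip}=0$; $\nu_l=0,\mu_{ip}\neq0$; both nonzero) the right-hand side of~(\ref{eq:gradient2}) has a strictly positive denominator, hence is a continuous function of $(\nu_l,\mu_{ip})$. Crucially, the paper treats the factors $h_{ij}^{(q)}$ for $q\neq p$ appearing in the product as \emph{fixed} rather than as functions of the multipliers; its theorem statement (``continuous in $\nu_l$ and $\mu_{ip}$'') is read literally, one component and two multipliers at a time.

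What you do differently is interpret the theorem as continuity of the full price-based maximizer $\boldsymbol h^*(\boldsymbol\nu,\boldsymbol\mu)$, which is what Theorem~\ref{thm:sufficient} actually requires. You therefore confront the coupling in~(\ref{eq:gradient2}) head-on: the cluster decomposition, the logarithmic change of variables collapsing the cluster system to a single scalar equation, and the implicit function theorem are exactly the tools needed to turn the coupled fixed-point system into a genuine continuity statement for $\boldsymbol h^*$. The paper's approach buys brevity but leaves the coupling unaddressed; your approach buys the stronger (and, for the zero-duality-gap conclusion, the relevant) statement, at the cost of a real analytic argument. Both are valid readings of the theorem as stated; yours is the more complete one.
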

\begin{proof}
From Lemma~\ref{lem:active}, we know at least one of  $\nu_{l}$ and $\mu_{ip}$ is non-zero, for all $i\in\mathcal{D},$ $l\in\{1,\cdots,|p|\}$ and $p\in\mathcal{P}^i.$ Hence there are three cases, (i) $\nu_{l}\neq0$ and $\mu_{ip}=0$; (ii) $\nu_{l}=0$ and $\mu_{ip}\neq0$; and (iii) $\nu_{l}\neq0$ and $\mu_{ip}\neq0$.

For case (i), we have 
\begin{align}
h_{il}^{(p)}=\frac{w_{ip}\psi^{|p|-l}}{\nu_{l}\Bigg(\prod_{\substack{q: q \neq p,\\ j\in\{1,\cdots, |q|\}}}(1-h_{ij}^{(q)})\Bigg)}- \frac{1}{\lambda_{ip}},
\end{align}
which is clearly continuous in $\nu_{l},$ for all $i\in\mathcal{D},$ $l\in\{1,\cdots,|p|\}$ and $p\in\mathcal{P}^i.$

Similarly for case (ii), we have
\begin{align}
h_{il}^{(p)}=\frac{w_{ip}\psi^{|p|-l}}{\mu_{ip}}- \frac{1}{\lambda_{ip}},
\end{align}
which is also clearly continuous in $\mu_{ip},$ for all $i\in\mathcal{D},$ $l\in\{1,\cdots,|p|\}$ and $p\in\mathcal{P}^i$.

For case (iii), from~(\ref{eq:gradient2}), it is obvious that $h_{il}^{(p)}$ is continuous in $\nu_{l}$ and $\mu_{ip}$ for all $i\in\mathcal{D},$ $l\in\{1,\cdots,|p|\}$ and $p\in\mathcal{P}^i.$

Therefore, we know that $h_{il}^{(p)}$ is is continuous in $\nu_{l}$ and $\mu_{ip}$ for all $i\in\mathcal{D},$ $l\in\{1,\cdots,|p|\}$ and $p\in\mathcal{P}^i.$
\end{proof}

\begin{remark}
Note that similar arguments (by using Lemma \ref{lem:active}) hold true for various other choice of utility functions such as: $\beta$- fair utility functions (Section $2$\cite{srikant13}). Therefore, the primal-dual algorithm consisting of~(\ref{eq:dual-opt-h}) and~(\ref{eq:dual-opt-lambda}) converges to the globally optimal solution for a wide range of utility functions. 
\end{remark}

Algorithm~\ref{algo:primal-dual-alg} summarizes the details of this algorithm. 

\begin{algorithm}
	\begin{algorithmic}[]
		\State \textbf{Input}: $\forall \boldsymbol{\nu_0$, $\mu_0}\in\mathcal{R}$ and $\boldsymbol h_0$ 
		\State \textbf{Output}: The optimal hit probabilities $\boldsymbol h$
		\State \textbf{Step $0$:}  $t=0$, $\boldsymbol \nu[t]\leftarrow\boldsymbol \nu_0$, $\boldsymbol\mu[t]\leftarrow\boldsymbol\mu_0$, $\boldsymbol h[t]\leftarrow\boldsymbol h_0$
		\State \textbf{Step $t\geq 1$} 
	                 \While{Equation \eqref{eq:gradient-lambda-mu} $\neq$ 0}
		              \State \textbf{First,} compute $\boldsymbol h_{il}^{(p)}[t+1]$ for $i\in\mathcal{D},$ $l\in\{1,\cdots, |p|\}$ and $p\in\mathcal{P}^i$ through~(\ref{eq:gradient2});
		              \State \textbf{Second,} update $\nu_l[t+1]$ and $\mu_{ip}[t+1]$ through~(\ref{eq:dual-opt-lambda}) given $\boldsymbol h[t+1],$ $\boldsymbol\nu[t]$ and $\boldsymbol \mu[t]$ for $l\in V,$ $i\in\mathcal{D}$ and $p\in\mathcal{P}^i$
		\EndWhile
	\end{algorithmic}
	\caption{Primal-Dual Algorithm}
	\label{algo:primal-dual-alg}
\end{algorithm}


\subsubsection{Model Validation and Insights}\label{sec:general-common-validations}
We evaluate the performance of Algorithm~\ref{algo:primal-dual-alg} on a three-layer edge network shown in Figure~\ref{fig:model}.  We assume that there are totally $100$ unique contents in the system requested from four paths. The cache size is given as $B_v=10$ for $v=1,\cdots,7.$  We consider the utility function $U_{ip}(x) = \lambda_{ip}\log(1+x),$ and the popularity distribution over these contents is Zipf with parameter $0.8.$  W.l.o.g., the aggregate request arrival rate is one. The discount factor $\psi=0.1.$


We solve the optimization problem in~(\ref{eq:max-mcdp-general-common}) using a Matlab routine \texttt{fmincon}.  Then we implement our primal-dual algorithm given in Algorithm~\ref{algo:primal-dual-alg}.  {The result for aggregate optimal utility is presented in Figure~\ref{fig:conv-primal-dual}.  It is clear that the primal-dual algorithm successfully converges to the optimal solution.}

Similar to Section~\ref{sec:validations-line-cache}, we compare MCDP to K-LRU, K-LRU(B) and UB.   Figure~\ref{fig:policy-comp-tree} compares the performance of different eviction policies to our MCDP policy.  
We plot the relative performance w.r.t. the optimal aggregated utilities of all above policies, normalized to that under MCDP.   We again observe a huge gain of MCDP w.r.t. K-LRU and K-LRU(B) across all values of discount factor. However, the performance gap between MCDP and UB increases with an increase in the value of discount factor.


Finally,  we consider how cache capacity affects the overall objective for both MCDP and UB.  Define $\eta_l = B_l/n, \forall l \in V$ as the ratio of cache size to total number of contents.  For our numerical studies, we take $\eta_l = \eta, \forall l \in V.$  From Figure \ref{fig:capVsutil}, we observe that the aggregate optimal utility increases in $\eta$.  Note that as $\eta$ increases, the aggregate optimal utility gradually converges to a value, and then becomes insensitive to $\eta$.  For a range of simulations performed with different system parameters, we find $\eta \in (0.2,0.3)$ to be high enough to obtain maximum achievable aggregate utility.  This provides design guidelines on how to set cache size for real systems \cite{sundarrajan17, berger17adaptsize, ramadan17,garetto16}.   Furthermore,  the performance gap between UB and MCDP first increases in $\eta$ and later becomes insensitive to $\eta.$

\label{sec:general-cache-duality}

%
%


\section{Applications to General Graphs}\label{sec:app}
In this section, we consider a direct application of our framework to content distributions in CDNs, ICNs etc. The problem we consider so far is directly motived by and naturally captures many important realistic networking applications.  These include the Web \cite{che02}, the domain name system (DNS) \cite{jung02dns,mockapetris88},  content distribution networks (CDNs) \cite{maggs15,sitaraman14}, information and content-centric networks (ICNs/CCNs) \cite{jacobson09}, named data networks (NDNs) \cite{jacobson09}, etc.  For example, in modern CDNs with hierarchical topologies, requests for content can be served by intermediate caches placed at edge server that acts as a designated source in the domain.   Similarly, in ICNs/NDNs, named data are stored at designated servers. Requests for named data are routed to the server, which can be stored at intermediate routers to serve future requests.  Both settings directly map to the problem we study here (Section~\ref{sec:common}).   In particular, we consider content distribution in this section. These present hard problems: highly diverse traffic with different content types, such as videos, music and images, require CDNs to cache and deliver content using a shared distributed cache server infrastructure so as to obtain economic benefits.   Our timer-based model with simple cache capacity constraint enable us to provide optimal and distributed algorithms for these applications.

Here we consider a general network topology with overlapping paths and common contents requested along different paths.  Similar to~(\ref{eq:max-mcdp-general-common}) and Algorithm~\ref{algo:primal-dual-alg}, a non-convex optimization problem and a primal-dual algorithm can be formulated and designed. Due to space constraints, we omit the details and only show the performance of this general network. 

We consider two network topologies: Grid and lollipop.  \emph{Grid} is a  two-dimensional square grid while a (a,b)  \emph{lollipop} network is a complete graph of size $a$, connected to a path graph of length $b$.  Denote the network as $G=(V, E)$.  For grid, we consider $|V| = 16$, while we consider a $(3,4)$ lollipop topology with $|V| = 7$ and clique size $3.$  The library contain $|\mathcal{D}|=100$ unique contents.  Each node has access to a subset of contents in the library.  We assign a weight to each edge in $E$, selected uniformly from the interval $[1, 20].$   Next, we generate a set of requests in $G$ as described in \cite{ioannidis16}.  To ensure that paths overlap, we randomly select a subset $\tilde{V}\subset V$ nodes to generate requests.  Each node in $\tilde{V}$ can generate requests for contents in $\mathcal{D}$ following a Zipf distribution with parameter $\alpha=0.8.$  Requests are then routed over the shortest path between the requesting node in $\tilde{V}$ and the node in $V$ that caches the content.   Again, we assume that the aggregate request rate at each node in $\tilde{V}$ is one and the discount factor to be $\psi=0.1.$

\begin{table}[!htbp]
\centering
\begin{tabular}{c |c |c |c |c}
\hline
\hline
Topology&$\alpha$&MCDP&UB&$\%$ Gap\\
\hline
Grid&$0.8$&$0.0923$&$0.1043$&$11.50$\\
Grid&$1.2$&$0.3611$&$0.4016$&$10.08$\\
Lollipop&$0.8$&$0.0908$&$0.1002$&$9.38$\\
Lollipop&$1.2$&$0.3625$&$0.4024$&$9.91$\\
\hline
\hline
\end{tabular}
\caption{Optimal aggregate utilities under various network topologies.}
\vspace{-0.1in}
\label{tb:topo}
\end{table}

We evaluate the performance of MCDP over the graphs across various Zipf parameter in Table \ref{tb:topo}. It is clear that for both  network topologies, aggregate utility obtained from our TTL-based framework with MCDP policy is higher for higher zipf parameter as compared to  lower zipf parameter. With increase in Zipf parameter, the difference between request rates of popular and less popular contents increases. The aggregate request rate over all contents is the same in both cases. Thus popular contents get longer fraction of rates which in turn yields higher aggregate utility. However, the performance gap between UB and MCDP is around one tenth in both cases and is not affected much by the Zipf parameter.

%
\section{Minimizing Overall Costs}\label{sec:cost}
In Section~\ref{sec:line}, we focus on maximizing the sum of utilities across all contents over the edge network, which captures user satisfaction. However, communication costs for content transfers across the network are also critical in many network applications. This cost includes (i) \emph{search cost} for finding the requested content in the network; (ii) \emph{fetch cost} to serve the content to the user; and (iii) \emph{transfer cost} for cache inner management due to a cache hit or miss. Below we derive expressions for evaluating these costs for MCDP policy. We relegate the cost analysis for MCD policy to Appendix \ref{appendix:mcd-cost}.

\subsection{Search and Fetch Cost}
A request is sent along a path until it hits a cache that stores the requested content. We define \emph{search cost} (\emph{fetch cost}) as the cost of finding (serving) the requested content in the cache network (to the user). Consider cost as a function $c_s(\cdot)$ ($c_f(\cdot)$) of the hit probabilities. 
 Then the expected search cost across the network is given as
\begin{align}\label{eq:searching-cost}
S_{\text{MCDP}}= \sum_{i\in \mathcal{D}} \lambda_i c_s\left(\sum_{l=0}^{|p|}(|p|-l+1)h_{il}\right).
\end{align}
Fetch cost has a similar expression with $c_f(\cdot)$ replacing $c_s(\cdot).$

\subsection{Transfer Cost}
Under TTL, upon a cache hit, the content either transfers to a higher index cache or stays in the current one, and upon a cache miss, the content transfers to a lower index cache (MCDP).  
We define \emph{transfer cost} as the cost due to cache management upon a cache hit or miss.  Consider the cost as a function $c_m(\cdot)$ of the hit probabilities.  


{Note that under MCDP, there is a transfer upon a content request or a timer expiry except two cases: (i) content $i$ is in cache $1$ and a timer expiry occurs, which occurs with probability $\pi_{i1}e^{-\lambda_iT_{i1}}$; and (ii) content $i$ is in cache $|p|$ and a cache hit (request) occurs, which occurs with probability $\pi_{i|p|}(1-e^{-\lambda_iT_{i|p|}})$.  Then the transfer cost for content $i$ at steady sate is
\begin{align}
M_{\text{MCDP}}^i&= \lim_{n\rightarrow \infty}M_{\text{MCDP}}^{i, n}=\lim_{n\rightarrow\infty}\frac{\frac{1}{n}M_{\text{MCDP}}^{i, j}}{\frac{1}{n}\sum_{j=1}^n (t_j^i-t_{j-1}^i)}\nonumber\\
&=\frac{1-\pi_{i1}e^{-\lambda_iT_{i1}}-\pi_{i|p|}(1-e^{-\lambda_iT_{i|p|}})}{\sum_{l=0}^{|p|}\pi_l\mathbb{E}[t_j^i-t_{j-1}^i|X_j^i=l]},
\end{align} 
where $M_{\text{MCDP}}^{i, j}$ means there is a transfer cost for content $i$ at the $j$-th request or timer expiry,  $\mathbb{E}[t_j^i-t_{j-1}^i|X_j^i=l]=\frac{1-e^{-\lambda_iT_{il}}}{\lambda_i}$ is the average time content $i$ spends in cache $l.$}

{Therefore, the transfer cost for MCDP is 
\begin{align}\label{eq:moving-cost-mcdp}
&M_{\text{MCDP}}=\sum_{i\in\mathcal{D}}\lambda_iM_{\text{MCDP}}^i\nonumber\displaybreak[0]\\
&= \sum_{i\in\mathcal{D}}\lambda_i\frac{1-\pi_{i1}e^{-\lambda_iT_{i1}}-\pi_{i|p|}(1-e^{-\lambda_iT_{i|p|}})}{\sum_{l=0}^{|p|}\pi_{il}\frac{1-e^{-\lambda_iT_{il}}}{\lambda_i}}\nonumber\displaybreak[1]\\
&=\sum_{i\in\mathcal{D}}\lambda_i\Bigg(\frac{1-\pi_{i1}}{\sum_{l=0}^{|p|}\pi_{il}\mathbb{E}[t_j^i-t_{j-1}^i|X_j^i=l]}+\lambda_i (h_{i1}-h_{i|p|})\Bigg),
\end{align}
where $(\pi_{i0},\cdots, \pi_{i|p|})$ for $i\in\mathcal{D}$ is the stationary distribution for the DTMC $\{X_k^i\}_{k\geq0}$ defined in Section~\ref{sec:ttl-stationary}. 

\begin{remark}\label{rem:line-moving-cost}
{The expected transfer cost $M_{\text{MCDP}}$~(\ref{eq:moving-cost-mcdp}) is a function of the timer values. Unlike the problem of maximizing sum of utilities, it is difficult to express $M_{\text{MCDP}}$ as a function of hit probabilities.  }
\end{remark}

\begin{figure}
\centering
\includegraphics[height = 3in, width=3.5in]{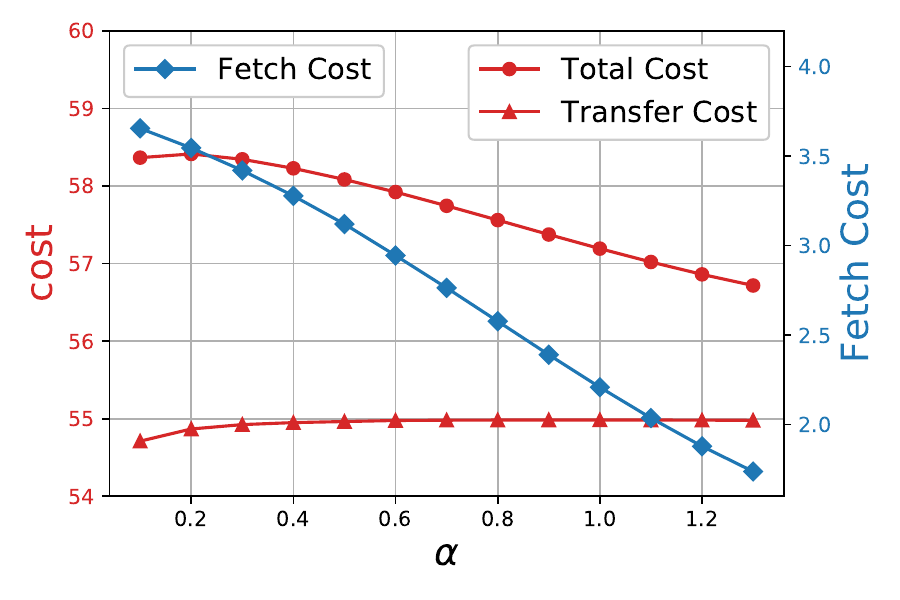}
\caption{Various costs associated with MCDP policy.}
\label{fig:model-linecost}
\vspace{-0.2in}
\end{figure}

\subsection{Variability Improves Performance} 
We study the impact of $\alpha$ on performance for a three node path under the same simulation setting as in Section~\ref{sec:validations-line-cache}.  We solve~(\ref{eq:max-mcdp}) for various values of $\alpha$ and obtain corresponding optimal timers and hit probabilities.  We then compute costs associated with the optimal policy using~(\ref{eq:searching-cost}) and~(\ref{eq:moving-cost-mcdp}), with linear function $c_f$ ($c_s$) with weight $1$ ($0.1$), and plot them in Figure~\ref{fig:model-linecost}.  Note the value of search cost is much smaller than fetch cost, which is consistent with practice, hence we ignore its curve in Figure~\ref{fig:model-linecost}.  As $\alpha$ increases, popular contents get longer fraction of request rates under Zipf distributions. Furthermore,  as $\alpha$ increases, the popular contents are cached closer to the user under an optimal policy (See Section~\ref{sec:validations-line-cache}), hence, the overall search and fetch cost decreases.  However, the transfer cost that constitutes a major proportion of the total cost does not change too much as $\alpha$ increases.  Therefore, the total cost associated with the optimal policy decreases in $\alpha. $

\subsection{Optimization}
Our goal is to determine optimal timer values at each cache on one path in the edge network so that the total costs are minimized. To that end, we formulate the following optimization problem for MCDP
\begin{subequations}\label{eq:min-mcdp1}
\begin{align}
\min \quad&{S}_{\text{MCDP}} +{F}_{\text{MCDP}} +{M}_{\text{MCDP}} \displaybreak[0]\\
&\text{Constraints in~(\ref{eq:max-ttl}}).
\end{align}
\end{subequations}

\begin{remark}
As discussed in Remark~\ref{rem:line-moving-cost}, we cannot express transfer cost of MCDP~(\ref{eq:moving-cost-mcdp}) in terms of hit probabilities, hence, we are not able to transform the optimization problem~(\ref{eq:min-mcdp1}) for MCDP into a convex one through a change of variables as we did in Section~\ref{sec:line-utility-max}. Solving the non-convex optimization problem~(\ref{eq:min-mcdp1}) is a subject of future work.  
\end{remark}

%
%
%
%
%

\section{Related Work}\label{sec:related}

There is a rich literature on the design, modeling and analysis of cache networks, including TTL caches \cite{rodriguez16,fofack12,fofack14,berger14}, optimal caching \cite{ioannidis16,jian18infocom} and routing policies \cite{ioannidis17}.  In particular, Rodriguez et al. \cite{rodriguez16} analyzed the advantage of pushing content upstream, Berger et al. \cite{berger14} characterized the exactness of TTL policy in a hierarchical topology.  A unified approach to study and compare different caching policies is given in \cite{garetto16} and an optimal placement problem under a heavy-tailed demand has been explored in \cite{ferragut16}. 

Dehghan et al. \cite{dehghan15} as well as Abedini and Shakkottai \cite{abedini14} studied joint routing and content placement with a focus on a bipartite, single-hop setting.  Both showed that minimizing single-hop routing cost can be reduced to solving a linear program. Ioannidis and Yeh \cite{ioannidis17} studied the same problem under a more general setting for arbitrary topologies. 

An adaptive caching policy for a cache network was proposed in \cite{ioannidis16}, where each node makes a decision on which item to cache and evict.  An integer programming problem was formulated by characterizing the content transfer costs.  Both centralized and complex distributed algorithms were designed with performance guarantees.   This work complements our work, as we consider TTL cache and control the optimal cache parameters through timers to maximize the sum of utilities over all contents across the network.  However, \cite{ioannidis16} proposed only approximate algorithms while our timer-based models enable us to design optimal solutions since content occupancy can be modeled as a real variable (e.g. a probability).    

Closer to our work, a utility maximization problem for a single cache was considered under IRM \cite{dehghan16,nitishjian17} and stationary requests \cite{nitishjian-tech17}, while \cite{ferragut16} maximized the hit probabilities under heavy-tailed demands over a single cache.  None of these approaches generalizes to edge networks, which leads to non-convex formulations (See Section~\ref{sec:non-common} and Section~\ref{sec:common}); addressing this lack of convexity in its full generality, for arbitrary network topologies, overlapping paths and request arrival rates, is one of our technical contributions.

\section{Conclusion}\label{sec:conclusion}

We constructed optimal timer-based TTL polices for content placement in edge networks through a unified optimization approach.  We formulated a general utility maximization framework, which is non-convex in general.  We identified the non-convexity issue and proposed efficient distributed algorithm to solve it.  We proved that the distributed algorithms converge to the globally optimal solutions.  We showed the efficiency of these algorithms through numerical studies.  {An analysis of the MCDP algorithm with reset-TTL timers would remain our future work.}

\appendix
\label{appendix}

\subsection{Stationary Behaviors of MCDP}
\label{mcdpmodel-appendix}
\cite{gast16} considered a caching policy LRU($\boldsymbol m$). Though the policy differ from MCDP, the stationary analysis is similar.  We present our result here for completeness, which will be used subsequently in the paper.

Under IRM model, the request for content $i$ arrives according a Poisson process with rate $\lambda_i.$ As discussed earlier, for TTL caches, content $i$ spends a deterministic time in a cache if it is not requested, which is independent of all other contents. We denote the timer as $T_{il}$ for content $i$ in cache $l$ on the path $p,$ where $l\in\{1,\cdots, |p|\}.$  

Denote $t_k^i$ as the $k$-th time that content $i$ is either requested or moved from one cache to another.  For simplicity, we assume that content is in cache $0$ (i.e., server) if it is not in the cache network.  Then we can define a discrete time Markov chain (DTMC) $\{X_k^i\}_{k\geq0}$ with $|p|+1$ states, where $X_k^i$ is the cache index that content $i$ is in at time $t_k^i.$  Since the event that the time between two requests for content $i$ exceeds $T_{il}$ happens with probability $e^{-\lambda_i T_{il}},$ then the transition matrix of $\{X_k^i\}_{k\geq0}$ is given as 

{\footnotesize
\begin{align}
{\bf P}_i^{\text{MCDP}}=
 \begin{bmatrix}
    0     &  1 \\
    e^{-\lambda_iT_{i1}}      &  0 & 1-e^{-\lambda_iT_{i1}}  \\
     &\ddots&\ddots&\ddots\\
    &&e^{-\lambda_iT_{i(|p|-1)}}      &  0 & 1-e^{-\lambda_iT_{i(|p|-1)}} \\
    &&&e^{-\lambda_iT_{i|p|}}   & 1-e^{-\lambda_iT_{i|p|}}
\end{bmatrix}.
\end{align}
}
Let $(\pi_{i0},\cdots,\pi_{i|p|})$ be the stationary distribution for ${\bf P}_i^{\text{MCDP}}$, we have
\begin{subequations}\label{eq:stationary-mcdp}
\begin{align}
& \pi_{i0} = \frac{1}{1+\sum_{j=1}^{|p|}e^{\lambda_iT_{ij}}\prod_{s=1}^{j-1}(e^{\lambda_iT_{is}}-1)},\displaybreak[0] \label{eq:stationary-mcdp0}\\
&\pi_{i1} = \pi_{i0}e^{\lambda_iT_{i1}},\displaybreak[1]\label{eq:stationary-mcdp1}\\
&\pi_{il} = \pi_{i0}e^{\lambda_iT_{il}}\prod_{s=1}^{l-1}(e^{\lambda_iT_{is}}-1),\; l = 2,\cdots, |p|.\label{eq:stationary-mcdp2}
\end{align}
\end{subequations}

Then the average time that content $i$ spends in cache $l\in\{1,\cdots, |p|\}$ can be computed as 
\begin{align}\label{eq:averagetime-mcdp}
\mathbb{E}[t_{k+1}^i-t_k^i|X_k^i = l]= \int_{0}^{T_{il}}\left(1-\left[1-e^{-\lambda_it}\right]\right)dt = \frac{1-e^{-\lambda_iT_{il}}}{\lambda_i}, 
\end{align}
and $\mathbb{E}[t_{k+1}^i-t_k^i|X_k^i = 0] = \frac{1}{\lambda_i}.$ 

Given~(\ref{eq:stationary-mcdp}) and~(\ref{eq:averagetime-mcdp}), the timer-average probability that content $i$ is in cache $l\in\{1,\cdots, |p|\}$ is 
\begin{align*}
& h_{i1} = \frac{e^{\lambda_iT_{i1}}-1}{1+\sum_{j=1}^{|p|}(e^{\lambda_iT_{i1}}-1)\cdots (e^{\lambda_iT_{ij}}-1)},\\
& h_{il} = h_{i(l-1)}(e^{\lambda_iT_{il}}-1),\; l = 2,\cdots,|p|,
\end{align*}
where $h_{il}$ is also the hit probability for content $i$ at cache $l.$

\subsection{Stationary Behaviors of Move Copy Down (MCD) policy}\label{mcdmodel-appendix}
MCD behaves the same as MCDP upon a cache hit, i.e. content $i$ is moved to cache $l+1$ preceding cache $l$ in which $i$ is found, and the timer at cache $l+1$ is set to $T_{i{(l+1)}}$.  However, content $i$ is discarded once the timer expires.

We define a DTMC $\{Y_k^i\}_{k\geq0}$ by observing the system at the time that content $i$ is requested. Similarly, if content $i$ is not in the cache network, then it is in cache $0$, thus we still have $|p|+1$ states.  If $Y_k^i=l$, then the next request for content $i$ comes within time $T_{il}$ with probability $1-e^{-\lambda_iT_{il}}$, thus we have $Y_{k+1}^i=l+1,$ otherwise $Y_{k+1}^i=0$ due to the MCD policy. Therefore, the transition matrix of $\{Y_k^i\}_{k\geq0}$ is given as 
{\footnotesize
\begin{align}
{\bf P}_i^{\text{MCD}}=
 \begin{bmatrix}
    e^{-\lambda_iT_{i1}}   & 1-e^{-\lambda_iT_{i1}} & &\\
    e^{-\lambda_iT_{i2}}   & &1-e^{-\lambda_iT_{i2}} & &\\
    \vdots&&\ddots&\\
    e^{-\lambda_iT_{i|p|}}      &  && 1-e^{-\lambda_iT_{i|p|}} \\
    e^{-\lambda_iT_{i|p|}}   && &1-e^{-\lambda_iT_{i|p|}}
\end{bmatrix}.
\end{align}
}
Let $(\tilde{\pi}_{i0},\cdots,\tilde{\pi}_{i|p|})$ be the stationary distribution for ${\bf P}_i^{\text{MCD}}$, then we have 
\begin{subequations}\label{eq:stationary-mcd-app}
\begin{align}
&\tilde{\pi}_{i0}=\frac{1}{1+\sum_{l=1}^{|p|-1}\prod_{j=1}^l(1-e^{-\lambda_i T_{ij}})+e^{\lambda_i T_{i|p|}}\prod_{j=1}^{|p|}(1-e^{-\lambda_i T_{ij}})},\displaybreak[0]\label{eq:mcd1-app}\\
&\tilde{\pi}_{il}=\tilde{\pi}_{i0}\prod_{j=1}^{l}(1-e^{-\lambda_iT_{ij}}),\quad l=1,\cdots, |p|-1,\displaybreak[1]\label{eq:mcd2-app}\\
&\tilde{\pi}_{i|p|}=e^{\lambda_i T_{i|p|}}\tilde{\pi}_{i0}\prod_{j=1}^{|p|-1}(1-e^{-\lambda_iT_{ij}}).\label{eq:mcd3-app}
\end{align}
\end{subequations}

By PASTA property \cite{MeyTwe09}, we immediately have that the stationary probability that content $i$ is in cache $l\in\{1,\cdots, |p|\}$ is given as 
\begin{align*}
h_{il}=\tilde{\pi}_{il}, \quad l=0, 1, \cdots, |p|,
\end{align*}
where $\tilde{\pi}_{il}$ are given in~(\ref{eq:stationary-mcd-app}).
\subsection{Optimization Problem for MCD} \label{app-mcd-opt}
From~(\ref{eq:stationary-mcd-app}), we simply check that there exists a mapping between $(h_{i1},\cdots, h_{i|p|})$ and $(T_{i1},\cdots, T_{i|p|}).$
By simple algebra, we can obtain
\begin{subequations}\label{eq:stationary-mcd-timers}
\begin{align}
& T_{i1} = -\frac{1}{\lambda_i}\log \bigg(1 - \frac{h_{i1}}{1-\big(h_{i1} + h_{i2} + \cdots + h_{i|p|}\big)}\bigg), \label{eq:mcdttl1}\\
& T_{il} = -\frac{1}{\lambda_i}\log\bigg(1-\frac{h_{il}}{h_{i(l-1)}}\bigg),\quad l= 2, \cdots , |p|-1, \label{eq:mcdttl2}\\
& T_{i|p|} = \frac{1}{\lambda_i}\log\bigg(1+\frac{h_{i|p|}}{h_{i(|p|-1)}}\bigg). \label{eq:mcdttl3}
\end{align}
\end{subequations}
Since $T_{il}\geq0,$ we have 
\begin{align} \label{eq:mcd-constraint1}
h_{i(|p|-1)} \leq h_{i(|p|-2)} \leq \cdots \leq h_{i1} \leq h_{i0}.
\end{align}
and 
\begin{align} \label{eq:mcd-constraint2}
h_{i1} + h_{i2} + \cdots + h_{i|p|} \leq 1, 
\end{align}
must hold during the operation.
\subsubsection{Non-common Content Requests under General Edge Networks}\label{appendix:mcd-non-common}
Similarly, we can formulate a utility maximization optimization problem for MCD under general cache network.
\begin{subequations}\label{eq:max-mcd-general}
\begin{align}
\text{\bf{G-N-U-MCD:}} \max \quad&\sum_{i\in \mathcal{D}} \sum_{p\in\mathcal{P}^i}\sum_{l=1}^{|p|} \psi^{|p|-l} U_{ip}(h_{il}^{(p)}) \displaybreak[0]\\
\text{s.t.} \quad&\sum_{i\in \mathcal{D}} \sum_{l\in p}h_{il}^{(p)} \leq B_l,p\in\mathcal{P},   \displaybreak[1]\label{eq:hpbmcd1-general}\\
& \sum_{l=1}^{|p|}h_{il}^{(p)}\leq 1,\quad\forall i \in \mathcal{D}, p\in\mathcal{P}^i,\displaybreak[2]\label{eq:hpbmcd2-general}\\
& h_{i(|p|-1)}^{(p)} \leq \cdots \leq h_{i1}^{(p)} \leq h_{i0}^{(p)}, \quad\forall p\in\mathcal{P}^i, \label{eq:hpbmcd3-general}\\
&0\leq h_{il}^{(p)}\leq1, \quad\forall i \in \mathcal{D}, p\in\mathcal{P}^i.
\end{align}
\end{subequations}

\begin{prop}
Since the feasible sets are convex and the objective function is strictly concave and continuous, the optimization problem defined in~(\ref{eq:max-mcd-general}) under MCD has a unique global optimum.
\end{prop}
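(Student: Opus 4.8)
The plan is to verify the three hypotheses stated in the proposition — convexity of the feasible region, strict concavity of the objective, and continuity — and then invoke the standard fact that a continuous, strictly concave function attains a \emph{unique} maximum over a nonempty compact convex set. This mirrors the arguments for the earlier propositions on \textbf{L-U-MCDP}, \textbf{L-U-MCD} and \textbf{G-N-U-MCDP}, so the write-up can be kept short.

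First I would examine the feasible set of~(\ref{eq:max-mcd-general}). Constraints~(\ref{eq:hpbmcd1-general}),~(\ref{eq:hpbmcd2-general}) and~(\ref{eq:hpbmcd3-general}) are all affine inequalities in the decision variables $h_{il}^{(p)}$, as are the box constraints $0\le h_{il}^{(p)}\le 1$. Hence the feasible set is a finite intersection of closed half-spaces, i.e. a closed polyhedron, and is therefore convex and closed; the box constraints make it bounded, so it is compact. It is also nonempty, since setting every $h_{il}^{(p)}=0$ (and $h_{i0}^{(p)}=0$, which is consistent with~(\ref{eq:hpbmcd3-general})) satisfies all constraints. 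Thus the constraint set is a nonempty compact convex set.

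Next I would address the objective. Because requests along different paths are for distinct contents in the non-common setting, each variable $h_{il}^{(p)}$ occurs in exactly one summand $\psi^{|p|-l}U_{ip}(h_{il}^{(p)})$ of the objective. By the assumptions in Section~\ref{sec:prelim}, each $U_{ip}$ is continuously differentiable, increasing, and strictly concave on $[0,1]$, and the weight $\psi^{|p|-l}$ is strictly positive since $0<\psi\le 1$; hence each summand is a strictly concave function of its single argument, and the objective is continuous. The objective is therefore a coordinate-separable sum of strictly concave terms, one per coordinate, which makes it strictly concave jointly in the vector $\boldsymbol h=(h_{il}^{(p)})$: if $\boldsymbol h\neq\boldsymbol h'$ they differ in some coordinate, and strict concavity in that coordinate combined with (ordinary) concavity in the remaining ones yields the strict midpoint inequality. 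Finally, by Weierstrass a continuous function on the nonempty compact feasible set attains its maximum, and strict concavity over the convex feasible set rules out two distinct maximizers, so the global optimum exists and is unique.

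The only step needing a touch of care — and the ``obstacle'' such as it is — is the passage from strict concavity in each coordinate separately to joint strict concavity; this relies on the separability of the objective, which the non-common-content assumption guarantees. (Precisely this separability fails in Section~\ref{sec:general-cache-common}, where the same content appears on overlapping paths, which is why that case instead requires the zero-duality-gap / primal--dual treatment rather than a direct convexity argument.)
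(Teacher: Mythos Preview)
Your proposal is correct and follows the same approach the paper takes: the paper gives no separate proof for this proposition beyond the assertion embedded in its statement (convex feasible set, strictly concave continuous objective $\Rightarrow$ unique global optimum), and you have simply spelled out why each hypothesis holds. One small quibble: in the MCD model $h_{i0}^{(p)}$ is not an independent variable but is determined by $h_{i0}^{(p)}=1-\sum_{l\ge 1}h_{il}^{(p)}$, so setting all $h_{il}^{(p)}=0$ for $l\ge 1$ gives $h_{i0}^{(p)}=1$, not $0$; this still satisfies~(\ref{eq:hpbmcd3-general}), so your nonemptiness check goes through unchanged, and the constraint remains affine either way.
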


\subsubsection{Common Content Requests under General Edge Networks}\label{appendix:mcd-common}
Similarly, we can formulate the following optimization problem for MCD with TTL caches,
\begin{subequations}\label{eq:max-mcd-general}
\begin{align}
&\text{\bf{G-U-MCD:}}\nonumber\\
\max \quad&\sum_{i\in \mathcal{D}} \sum_{p\in \mathcal{P}^i} \sum_{l=1}^{|p|} \psi^{|p|-l} U_{ip}(h_{il}^{(p)}) \displaybreak[0]\\
\text{s.t.} \quad& \sum_{i\in \mathcal{D}} \bigg(1-\prod_{p:j\in\{1,\cdots,|p|\}}(1-h_{ij}^{(p)})\bigg) \leq C_j,\quad\forall j \in V, \displaybreak[1] \label{eq:max-mcd-genenral-cons1}\\
& \sum_{j\in\{1,\cdots,|p|\}}h_{ij}^{(p)}\leq 1,\quad \forall i \in \mathcal{D}, \forall p \in \mathcal{P}^i,   \displaybreak[2]\label{eq:max-mcd-genenral-cons2}\\
& h_{i(|p|-1)}^{(p)} \leq \cdots \leq h_{i1}^{(p)} \leq h_{i0}^{(p)}, \quad\forall p\in\mathcal{P}^i, \displaybreak[3] \label{eq:max-mcd-genenral-cons3}\\
&0\leq h_{il}^{(p)}\leq 1, \quad\forall i\in\mathcal{D}, \forall p\in\mathcal{P}^i. \label{eq:max-mcd-genenral-cons4}
\end{align}
\end{subequations}

\begin{prop}
Since the feasible sets are non-convex, the optimization problem defined in~(\ref{eq:max-mcd-general}) under MCD is a non-convex optimization problem.
\end{prop}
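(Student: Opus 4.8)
The plan is to follow the same reasoning used for the MCDP counterpart G-U-MCDP in~(\ref{eq:max-mcdp-general-common}): the objective $\sum_{i}\sum_{p}\sum_{l}\psi^{|p|-l}U_{ip}(h_{il}^{(p)})$ is a nonnegatively weighted sum of strictly concave utilities, hence strictly concave, while the box constraints~(\ref{eq:max-mcd-genenral-cons4}), the per-path budgets~(\ref{eq:max-mcd-genenral-cons2}), and the monotonicity chains~(\ref{eq:max-mcd-genenral-cons3}) are all affine and therefore describe polyhedral (in particular convex) sets. Since the maximization of a concave objective is a convex program exactly when its feasible set is convex, it suffices to exhibit an instance in which the region carved out by the node-capacity constraint~(\ref{eq:max-mcd-genenral-cons1}) is non-convex; then~(\ref{eq:max-mcd-general}) under MCD is a non-convex optimization problem.

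First I would isolate the nonlinearity. For a node $j$ lying on a single path, the term $1-\prod_{p:j\in\{1,\ldots,|p|\}}(1-h_{ij}^{(p)})$ reduces to the affine quantity $h_{ij}^{(p)}$, so the non-convexity can only arise when, as in the overlapping-path setting of Section~\ref{sec:general-cache-common}, some node $j$ is traversed by at least two distinct paths. Fix such a $j$ shared by two paths $p_1\neq p_2$; writing $g(a,b)=1-(1-a)(1-b)=a+b-ab$, the Hessian of $g$ has eigenvalues $+1$ and $-1$, so $g$ is neither convex nor concave, and constraint~(\ref{eq:max-mcd-genenral-cons1}) at $j$, restricted to two contents $1,2$ with every other content held at zero hit probability, reads $g(h_{1j}^{(p_1)},h_{1j}^{(p_2)})+g(h_{2j}^{(p_1)},h_{2j}^{(p_2)})\le C_j$.

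Next I would produce an explicit violation of midpoint convexity. Take $p_1$ and $p_2$ of length $3$ sharing the node $j$ at their middle position (as do nodes $5$ and $6$ in Figure~\ref{fig:cache-7-nodes}), so that the ordering~(\ref{eq:max-mcd-genenral-cons3}) together with the budget~(\ref{eq:max-mcd-genenral-cons2}) permits a hit probability of at most $\tfrac12$ at $j$ on each path. Let $\boldsymbol h^{A}$ give content $2$ hit probability $\tfrac12$ at $j$ on both $p_1$ and $p_2$, hit probability $\tfrac12$ at the cache between $j$ and the server on each path and at the server itself, and $0$ at the user-side cache, with content $1$ and all other contents at $0$ everywhere; let $\boldsymbol h^{B}$ be the same with contents $1$ and $2$ interchanged. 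Both are feasible, the left side of~(\ref{eq:max-mcd-genenral-cons1}) at $j$ equals $g(\tfrac12,\tfrac12)+g(0,0)=\tfrac34$, and (choosing the remaining node capacities generously) we set $C_j=\tfrac34$. At $\tfrac12(\boldsymbol h^{A}+\boldsymbol h^{B})$ each of contents $1$ and $2$ has hit probability $\tfrac14$ at $j$ on both paths, so the left side of~(\ref{eq:max-mcd-genenral-cons1}) at $j$ becomes $2\,g(\tfrac14,\tfrac14)=2\cdot\tfrac{7}{16}=\tfrac78>\tfrac34=C_j$: the midpoint is infeasible, so the feasible set is not convex, and therefore~(\ref{eq:max-mcd-general}) under MCD is a non-convex optimization problem.

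The step I expect to demand the most care is bookkeeping rather than anything conceptual: one must verify that $\boldsymbol h^{A}$, $\boldsymbol h^{B}$, and their midpoint can each be completed to full hit-probability vectors satisfying \emph{all} of~(\ref{eq:max-mcd-genenral-cons2})--(\ref{eq:max-mcd-genenral-cons4}), and the extra MCD-specific ordering~(\ref{eq:max-mcd-genenral-cons3}) is precisely what forces the value $\tfrac12$ (rather than $1$) at the shared cache and propagates mass onto the caches between $j$ and the server, so the per-path budget has to be checked along the whole of $p_1$ and $p_2$. One should also note that at least two contents are needed — for a single content, constraint~(\ref{eq:max-mcd-genenral-cons1}) at $j$ is equivalent to $\prod_p(1-h_{ij}^{(p)})\ge 1-C_j$, a log-concave superlevel set, which is convex — but this is harmless since $|\mathcal{D}|\ge 2$ in every case of interest.
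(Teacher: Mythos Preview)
The paper does not supply a proof for this proposition; it simply asserts non-convexity of the feasible set in the proposition statement itself, mirroring the analogous claim for MCDP in~(\ref{eq:max-mcdp-general-common}). Your proposal is therefore far more detailed than anything the paper offers, and your overall strategy---isolating the multilinear coupling term in~(\ref{eq:max-mcd-genenral-cons1}), observing that at least two overlapping paths and at least two contents are needed, and exhibiting an explicit midpoint-convexity violation---is sound and is precisely the way to substantiate the claim.

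There is, however, a numerical slip in your explicit construction. The MCD ordering constraint~(\ref{eq:max-mcd-genenral-cons3}) reads $h_{i(|p|-1)}^{(p)}\le\cdots\le h_{i1}^{(p)}\le h_{i0}^{(p)}$, where $h_{i0}^{(p)}=1-\sum_{l=1}^{|p|}h_{il}^{(p)}$ is the probability that content $i$ is at the server. For $|p|=3$ with the shared node $j$ at the middle position $2$, your assignment $h_{i1}^{(p)}=h_{i2}^{(p)}=\tfrac12$, $h_{i3}^{(p)}=0$ forces $h_{i0}^{(p)}=0$, which violates $h_{i1}^{(p)}\le h_{i0}^{(p)}$; in particular one cannot simultaneously have ``hit probability $\tfrac12$ at the server itself.'' The tight bound at position $2$ is $h_{i2}^{(p)}\le\tfrac13$, attained by $h_{i0}^{(p)}=h_{i1}^{(p)}=h_{i2}^{(p)}=\tfrac13$, $h_{i3}^{(p)}=0$. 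With this one-line adjustment your argument goes through unchanged: at $\boldsymbol h^{A}$ and $\boldsymbol h^{B}$ the left side of~(\ref{eq:max-mcd-genenral-cons1}) at $j$ equals $g(\tfrac13,\tfrac13)=\tfrac59$, so take $C_j=\tfrac59$; at the midpoint each of contents $1,2$ has $h_{i2}^{(p)}=\tfrac16$ on both paths, giving $2\,g(\tfrac16,\tfrac16)=\tfrac{11}{18}>\tfrac{10}{18}=C_j$, while the midpoint remains feasible for~(\ref{eq:max-mcd-genenral-cons2})--(\ref{eq:max-mcd-genenral-cons4}) since $h_{i0}^{(p)}=\tfrac23\ge h_{i1}^{(p)}=\tfrac16\ge h_{i2}^{(p)}=\tfrac16$.
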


\subsection{Minimizing Overall Costs for MCD}  \label{appendix:mcd-cost}
In the following, we first characterize various costs for MCD.  Then we formulate a minimization problem to characterize the optimal TTL policy for content placement in a path.  

\subsubsection{Search Cost}
Requests from user are sent along a path until it hits a cache that stores the requested content.  We define the \emph{search cost} as the cost for finding the requested content in the cache network. Consider the cost as a function $c_s(\cdot)$ of the hit probabilities.  Then the expected searching cost across the network is given as
\begin{align}\label{eq:search-cost1}
S_{\text{MCD}}= \sum_{i\in \mathcal{D}} \lambda_ic_s\left(\sum_{l=0}^{|p|}(|p|-l+1)h_{il}\right).
\end{align}

\subsubsection{Fetch Cost}
Upon a cache hit, the requested content will be sent to the user along the reverse direction of the path. We define the \emph{fetch cost} as the costing of fetching the content to serve the user who sent the request.  Consider the cost as a function $c_f(\cdot)$ of the hit probabilities.  Then the expected fetching cost across the network is given as 
\begin{align}\label{eq:fetching-cost}
F_{\text{MCD}}= \sum_{i\in \mathcal{D}} \lambda_ic_f\left(\sum_{l=0}^{|p|}(|p|-l+1)h_{il}\right).
\end{align}

\subsubsection{Transfer Cost}
We define the \emph{transfer cost} as the cost due to caching management upon a cache hit or miss.  Consider the cost as a function $c_m(\cdot)$ of the hit probabilities. Under MCD, since the content is discarded from the network once its timer expires, the transfer cost is caused by a cache hit. To that end, the requested content either moves to a higher index cache if it was in cache $l\in\{1,\cdots, |p|-1\}$ or stays in the same cache if it was in cache $|p|.$ Then the expected transfer cost across the network for MCD is given as 
\begin{align}\label{eq:moving-cost-mcd}
M_{\text{MCD}} = \sum_{i\in \mathcal{D}} \lambda_ic_m\left(1 - h_{i|p|}\right).
\end{align}

\subsubsection{Total Costs} Given the search cost, fetch cost and transfer cost, the total cost for MCD can be defined as
\begin{subequations}\label{eq:total-cost}
\begin{align}
&{SFM}_{\text{MCD}} =S_{\text{MCD}}+F_{\text{MCD}}+M_{\text{MCD}}, \displaybreak[0]\label{eq:total-cost-mcd}
\end{align} 
\end{subequations}
where the corresponding costs are given in~(\ref{eq:search-cost1}),~(\ref{eq:fetching-cost}),~(\ref{eq:moving-cost-mcd}).

\bibliographystyle{ieeetran}
\bibliography{refs} 



\end{document}